\documentclass[a4paper,twoside]{article}
\usepackage{amsmath,amssymb,amsthm}
\usepackage{amscd}
\usepackage[usenames,dvipsnames]{color}
\usepackage{pstricks}
\usepackage{graphicx}
\usepackage[all]{xy}
\usepackage{tabularx}

\setlength{\textheight}{230mm}
\setlength{\headheight}{1mm}
\setlength{\headsep}{1mm}
\setlength{\textwidth}{160mm}
\setlength{\topmargin}{1mm}
\setlength{\leftmargin}{1mm}
\setlength{\footskip}{8mm}
\setlength{\oddsidemargin}{1mm}
\setlength{\evensidemargin}{1mm}

\newtheorem{prop}{Proposition}[section]

\newtheorem{defi}{Definition}[section]
\newtheorem{lemm}{Lemma}[section]
\newtheorem{thm}{Theorem}[section]
\newtheorem{coro}{Corollary}[section]
\newtheorem{ex}{Example}[section]

\begin{document}
\date{}
%\begin{titlepage}
% 
\title{\textbf{Extension of distributions, scalings and renormalization of QFT
on Riemannian manifolds.}}
\author{Dang Nguyen Viet\\
Laboratoire Paul Painlev\'e (U.M.R. CNRS 8524)\\
UFR de Math\'ematiques\\
Universit\'e de Lille 1\\
59 655 Villeneuve d'Ascq C\'edex France.}

\maketitle

\begin{abstract}
Let $M$ be a smooth manifold 
and $X\subset M$
a closed subset of $M$.
In this paper, we 
introduce a natural condition
of \emph{moderate growth}
along $X$ for a distribution $t$
in $\mathcal{D}^\prime(M\setminus X)$
and prove that this condition 
is equivalent to the existence of an extension
of $t$
in $\mathcal{D}^\prime(M)$ generalizing previous results
of Meyer and Brunetti--Fredenhagen. When
$X$ is a closed submanifold
of $M$,
we show that the concept of 
distributions with moderate growth
coincides with 
weakly homogeneous distributions of Meyer
which can be intrinsically defined. Then we renormalize products
of distributions with
functions tempered along $X$ and finally, 
using the whole analytical machinery developed,
we give an existence proof
of perturbative quantum field theories
on Riemannian manifolds.
\end{abstract}

%\end{titlepage}

\tableofcontents

\section*{Introduction.}
Let 
us start
with the following
example
which is
discussed
in \cite[Example 9 p.~140]{ReedSimonI}
and actually
goes back to
Hadamard. We denote by
$\Theta$ the Heaviside
function (the indicator function
of $\mathbb{R}_{\geqslant 0}$),
consider the
function
$x^{-1} \Theta(x)$
viewed as a distribution
in $\mathcal{D}^\prime(\mathbb{R}\setminus \{0\})$.
Obviously,
the linear
map
\begin{equation}
\varphi\longmapsto
\int_0^\infty dx\frac{\varphi(x)}{x} 
\end{equation}
is ill-defined if $\varphi(0)\neq 0$
since the integral $\int_0^\infty \frac{dx}{x}$ 
diverges.

However,
the integral
$\int_0^\infty dx x^{-1}\varphi(x)$
\textbf{converges}
if $\varphi(0)=0$ 
and an elementary estimate
shows that
$x^{-1}\Theta(x)$
defines a linear functional
on the ideal of functions
$x\mathcal{D}(\mathbb{R})$
vanishing at $0$.
A test function
$\varphi\in \mathcal{D}(\mathbb{R})$
being given, note that
the following
expression
\begin{equation}
\underset{\varepsilon\rightarrow 0}{\lim}\int_\varepsilon^1 dx 
\frac{(\varphi(x)-\varphi(0))}{x} +\int_1^\infty dx \frac{\varphi(x)}{x}
\end{equation}
converges.

We thus define
a \textbf{renormalized} distribution:
\begin{eqnarray}\label{fpxalpha}
x^{-1}_+=\lim_{\varepsilon\rightarrow 0}
\int_\varepsilon^\infty dxx^{-1} + \log(\varepsilon)\delta
\end{eqnarray}
where we subtracted 
the 
distribution 
$\log(\varepsilon)\delta$
supported at $0$, which becomes
singular when $\varepsilon\rightarrow 0$,
called
\emph{local counterterm}.
The renormalized distribution
$x^{-1}_+\in\mathcal{D}^\prime(\mathbb{R})$,
called finite part of Hadamard,
extends the
linear functional
$x^{-1}\Theta(x)\in\left(x\mathcal{D}(\mathbb{R})\right)^\prime$.
Our example shows the most elementary situation
where we can extend a distribution
by an \emph{additive renormalization}.

\textbf{In what follows, $M$ will always
denote a
smooth, paracompact and oriented manifold.}
In our paper, we investigate 
the following
problem which has simple formulation:
we are given a 
manifold $M$ and a closed subset
$X\subset M$. 
We 
define a natural
growth condition
on $t\in\mathcal{D}^\prime(M\setminus X)$
which measures the singular behaviour
near $X$ and we address the following
problems:
\begin{enumerate}
\item can we find a distribution
$\overline{t}\in\mathcal{D}^\prime(M)$ 
s.t. the restriction of $\overline{t}$ on $M\setminus X$ coincides with $t$,
\item can we construct a linear
extension operator $\mathcal{R}$, 
eventually give explicit formulas for $\mathcal{R}$,
\item can we classify the different extension
operators.
\end{enumerate}

In general, the extension
problem
has no positive answer for 
a generic
distribution $t$
in 
$\mathcal{D}^\prime(M\setminus X)$
unless $t$
has moderate growth
when we approach the singular 
subset $X$.
\paragraph{Distributions
having moderate growth along a closed subset $X\subset M$.}
If $P$ is a differential operator
with smooth coefficients on $M$, and $K\subset U$
a compact subset,
we denote by
$\Vert \varphi\Vert^K_P$ (resp $\Vert \varphi\Vert_P$)
the seminorm 
$\sup_{x\in K}\vert P\varphi(x)\vert$
(resp $\sup_{x\in U}\vert P\varphi(x)\vert$).
We also denote by $d$ some arbitrary 
distance function induced by some choice
of smooth metric on $M$. For every open set $V\subset M$, 
we denote by $\mathcal{T}_{M\setminus X}(V)$ the set
of distributions in $\mathcal{D}^\prime(V\setminus X)$ 
with moderate growth along $X$ defined as follows:
\begin{defi}
A distribution 
$t\in \mathcal{D}^\prime(V\setminus X)$ 
has moderate growth  along $X$
if for all open relatively compact $U\subset V$, 
there is a seminorm $\Vert .\Vert_P$ and a pair of constant
$(C,s)\in\mathbb{R}_{\geqslant 0}^2$ such that
\begin{eqnarray}\label{moderategrowthestimate}
\vert t(\varphi)\vert \leqslant C(1+d(\text{supp }\varphi,X)^{-s})\Vert \varphi\Vert_P.
\end{eqnarray}
for all $\varphi\in \mathcal{D}(U\setminus X)$.
\end{defi}

\textbf{Remark}: If $t$ were in $\mathcal{D}^\prime(M)$, we would
have the same estimate without the divergent
factor $(1+d(\text{supp }\varphi,X)^{-s})$.

The space $\mathcal{T}_{M\setminus X}$
is intrinsically defined
since all metrics on $M$
are locally equivalent.
The first main theorem we shall prove is
\begin{thm}\label{mainthm}
The three following claims
are equivalent:
\begin{enumerate}
\item $t$ has moderate growth along $X$,
\item $t\in\mathcal{D}^\prime(M\setminus X)$
is extendible, 
\item there is a family of functions
$(\beta_\lambda)_{\lambda\in(0,1]}\subset C^\infty(M\setminus X)$,
$\beta_\lambda=0$ in a neighborhood of $X$, $\beta_\lambda\underset{\lambda\rightarrow 0}{\rightarrow} 1$
and a family of distributions $(c_\lambda)_{\lambda\in(0,1]}$ \textbf{supported on }$X$
such that
\begin{equation}
\lim_{\lambda\rightarrow 0}t\beta_\lambda -c_\lambda
\end{equation}
exists and defines an extension of $t$ in $\mathcal{D}^\prime(M)$.
\end{enumerate}
\end{thm}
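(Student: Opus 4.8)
I would prove the cycle of implications $(2)\Rightarrow(1)$, $(3)\Rightarrow(2)$, and $(1)\Rightarrow(3)$, the last one carrying all the analytic weight and producing an explicit extension on the way. The implication $(3)\Rightarrow(2)$ is immediate, since $(3)$ asserts that the limit it describes is an extension of $t$. For $(2)\Rightarrow(1)$, given an extension $\overline{t}\in\mathcal{D}^\prime(M)$ of $t$, on any relatively compact open $U$ the distribution $\overline{t}$ has finite order, so there are a constant $C$ and a seminorm $\Vert\cdot\Vert_P$ with $\vert\overline{t}(\varphi)\vert\leqslant C\Vert\varphi\Vert_P$ for all $\varphi\in\mathcal{D}(U)$, a fortiori for $\varphi\in\mathcal{D}(U\setminus X)$, on which $\overline{t}=t$; this is precisely \eqref{moderategrowthestimate} with $s=0$.

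\textbf{From moderate growth to an extension.} For $(1)\Rightarrow(3)$ I would first reduce to a local statement: fixing a locally finite cover $(U_i)$ of $M$ by relatively compact charts with subordinate partition of unity $(\chi_i)$, if each $t|_{U_i\setminus X}$ admits an extension $\overline{t}_i\in\mathcal{D}^\prime(U_i)$ then $\sum_i\chi_i\overline{t}_i$ extends $t$, because $\chi_i\varphi$ stays supported away from $X$ when $\varphi$ does. So I work in one chart $U$. Fix a dyadic partition of unity $1=\sum_{j\geqslant 0}\theta_j$ on $M\setminus X$ with $\mathrm{supp}\,\theta_j$ inside a shell where $d(x,X)\sim 2^{-j}$ and $\vert\partial^\alpha\theta_j\vert\leqslant C_\alpha 2^{j\vert\alpha\vert}$; since $d(\cdot,X)$ is only Lipschitz such $\theta_j$ are built by regularising $d(\cdot,X)$ at scale $2^{-j}$ on the $j$-th shell and patching. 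Writing $\mathcal{I}_X^{N+1}(U)$ for the $\varphi\in\mathcal{D}(U)$ whose derivatives of order $\leqslant N$ all vanish on $X$, a downward induction on the order of differentiation gives $\vert\partial^\alpha\varphi(x)\vert\leqslant C\,d(x,X)^{N+1-\vert\alpha\vert}\Vert\varphi\Vert_{C^{N+1}}$ for $\vert\alpha\vert\leqslant N$. Feeding this and \eqref{moderategrowthestimate} into the estimate of $t(\theta_j\varphi)$ (whose support is at distance $\sim 2^{-j}$ from $X$) yields, for a seminorm of some order $k$, $\vert t(\theta_j\varphi)\vert\leqslant C(1+2^{js})2^{-j(N+1-k)}\Vert\varphi\Vert_{C^{N+1}}$; hence for $N>k+s$ the series $\sum_j t(\theta_j\varphi)$ converges absolutely and defines a $C^{N+1}$-continuous functional $\widetilde{t}$ on $\mathcal{I}_X^{N+1}(U)$ which coincides with $t$ on $\mathcal{D}(U\setminus X)$ and is independent of the $\theta_j$ (compare two decompositions). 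To pass from $\mathcal{I}_X^{N+1}(U)$ to all of $\mathcal{D}(U)$ I would invoke a linear continuous Whitney jet-extension operator $T_M:\mathcal{D}(U)\to C^M_c(M)$, $M\geqslant N+1$, with $T_M\varphi$ having the same jet of order $M$ along $X$ as $\varphi$ and $\mathrm{supp}\,T_M\varphi$ in a fixed neighbourhood of $X\cap\overline{U}$; then $T_M\varphi=0$ whenever $\varphi$ vanishes near $X$, $\varphi-T_M\varphi\in\mathcal{I}_X^{M+1}(U)\subset\mathcal{I}_X^{N+1}(U)$, and $\overline{t}(\varphi):=\widetilde{t}(\varphi-T_M\varphi)$ is a distribution of finite order extending $t$. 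Gluing produces a global extension $\overline{t}\in\mathcal{D}^\prime(M)$.

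\textbf{Cut-offs and local counterterms.} Let $m$ be the order of $\overline{t}$ on a given relatively compact set. Set $\beta_\lambda:=\sum_{2^{-j}\geqslant\lambda}\theta_j\in C^\infty(M\setminus X)$: it vanishes in a neighbourhood of $X$, equals $1$ where $d(x,X)\gtrsim\lambda$, satisfies $\vert\partial^\beta\beta_\lambda\vert\lesssim\lambda^{-\vert\beta\vert}$ with $\mathrm{supp}(1-\beta_\lambda)\subset\{d(\cdot,X)\leqslant C\lambda\}$, and $\beta_\lambda\to 1$. Because $\beta_\lambda$ vanishes near $X$, $t\beta_\lambda=\overline{t}\beta_\lambda=\overline{t}-\overline{t}(1-\beta_\lambda)$, and $\overline{t}(1-\beta_\lambda)$ has order $\leqslant m$. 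Choosing $M>m$ and $T_M$ as above, put
\[
c_\lambda:=-\big(\overline{t}(1-\beta_\lambda)\big)\circ T_M ,
\]
which is a distribution supported on $X$ since $T_M$ kills functions vanishing near $X$. Then $t\beta_\lambda-c_\lambda=\overline{t}-\big(\overline{t}(1-\beta_\lambda)\big)\circ(\mathrm{id}-T_M)$, and using $\varphi-T_M\varphi\in\mathcal{I}_X^{M+1}(U)$, hence $\vert\partial^\alpha(\varphi-T_M\varphi)(x)\vert\lesssim d(x,X)^{M-\vert\alpha\vert}\Vert\varphi\Vert_{C^M}$, together with $\vert\partial^\beta(1-\beta_\lambda)\vert\lesssim\lambda^{-\vert\beta\vert}$ on $\{d\leqslant C\lambda\}$, the $C^m$-norm of $(1-\beta_\lambda)(\varphi-T_M\varphi)$ is $\lesssim\lambda^{M-m}\Vert\varphi\Vert_{C^M}\to 0$. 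Therefore $\lim_{\lambda\to 0}(t\beta_\lambda-c_\lambda)=\overline{t}$ in $\mathcal{D}^\prime(M)$, an extension of $t$; assembling the $c_\lambda$ from charts as above gives the global family, and $(3)$ follows, closing the cycle.

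\textbf{Main obstacle.} I expect the hard part to be the local step of $(1)\Rightarrow(3)$: since $X$ is only closed there is no dilation nor normal-bundle structure and $d(\cdot,X)$ is not smooth, so the dyadic partition has to be built by scale-dependent regularisation, and the Taylor subtraction that works when $X$ is a submanifold must be replaced by a continuous linear Whitney jet-extension operator along $X$; making its interaction with the shrinking cut-offs $\beta_\lambda$ quantitative is what forces $M>m$ and what makes the counterterms $c_\lambda$ land exactly on $X$.
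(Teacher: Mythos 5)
Your proposal follows essentially the same route as the paper: localise by a partition of unity to compactly supported distributions on $\mathbb{R}^n$, combine a dyadic decomposition in $d(\cdot,X)$-shells with a Taylor-type gain on functions vanishing to high order on $X$ (your estimate $\vert\partial^\alpha\varphi\vert\lesssim d(\cdot,X)^{N+1-\vert\alpha\vert}\Vert\varphi\Vert_{C^{N+1}}$ is exactly the content of Lemma~\ref{Malgrangestyletechnicallemma}) to absorb the divergence, and invoke the Whitney jet-extension theorem to pass to all test functions and produce the counterterms. The only difference is organisational: you prove a single cycle $(2)\Rightarrow(1)\Rightarrow(3)\Rightarrow(2)$ going directly to the constructive Whitney extension, whereas the paper first establishes $(1)\Leftrightarrow(2)$ via Hahn--Banach and then separately develops the Whitney-splitting renormalization-map machinery (Theorem~\ref{Whitneythm}, Proposition~\ref{counterterm}) to obtain $(1)\Leftrightarrow(3)$.
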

Our moderate growth condition is weaker than 
the hypothesis of \cite[Lemma 3.3]{KashiwaraRiemannHilbert}
and Theorem \ref{mainthm} can also be viewed as generalizations
of Theorem \cite[Thm 2.1 p.~48]{Meyer-98} and \cite[Thm 5.2 p.~645]{Brunetti2} which only 
treat
the extension problem
in the case of a point. 
When
$X$ is a vector subspace of $M=\mathbb{R}^n$,
we prove in Theorem \ref{scalingmodgrowth} that weakly
homogeneous distributions in the sense of Meyer
have moderate growth and are therefore extendible. 
In \cite[Chapter 1]{Dangthese}, we proved 
that weakly homogeneous distributions
along some vector subspace $X$ are \textbf{invariant by diffeomorphisms
preserving} $X$ which implies
that weakly homogeneous distributions along a submanifold $X\subset M$
can be intrinsically defined.

In the third part of our paper, we apply
our extension techniques
to establish in Theorem \ref{renormproduct} that the
product of distributions in $\mathcal{D}^\prime(M)$
with functions which are
tempered along $X$ (see definition \ref{tempdefi} for 
the algebra
$\mathcal{M}(X,M)$ of tempered functions) is renormalizable
which implies that the space of extendible
distributions or equivalently of distributions
in $\mathcal{T}_{M\setminus X}$ 
is a left $\mathcal{M}(X,M)$-module (Corollary \ref{tempfunmodules}). 

Finally we apply our analytic machinery
to the study of perturbative QFT on Riemannian manifolds.
In QFT, 
one is interested in making sense
of correlation functions 
denoted by
$\left\langle :\phi^{i_1}:(x_1)\dots :\phi^{i_n}:(x_n) \right\rangle$
which are objects living
in the configuration space $M^n$
that can be expressed formally, using the Feynman rules, 
in terms of
products of the form
$\underset{1\leqslant i<j\leqslant n}{\prod} G(x_i,x_j)^{n_{ij}} $
where $G$ is the Green function of $\Delta_g+m^2$ where $\Delta_g$ is the Laplace Beltrami operator.
A product $\underset{1\leqslant i<j\leqslant n}{\prod} G(x_i,x_j)^{n_{ij}} $
is called \emph{Feynman amplitude} and is depicted pictorially
by a graph with $n$ labelled vertices $\{1,\dots,n\}$ where
the vertices $i$ and $j$ are connected by $n_{ij}$ lines. 
In the second main Theorem (Thm \ref{renormthmqft}) of our paper, we prove that all Feynman amplitudes 
are renormalizable by a collection of extension
maps $(\mathcal{R}_n)_{n\in\mathbb{N}}$ where every map
$\mathcal{R}_n$ extends Feynman amplitudes living on the configuration
space $M^n$ minus all \emph{diagonals} to distributions on $M^n$ and the maps 
$(\mathcal{R}_n)_{n\in\mathbb{N}}$ satisfy some
axioms (definition \ref{axiomsrenormmaps}) which are due
to N. Nikolov \cite{NST}. This gives a different approach to
Costello's existence Theorem \cite{Costello,CostelloGwilliam} 
for perturbative QFT
on Riemannian manifolds.

\paragraph{Related works.}

In the litterature, the idea to consider
extendible distributions really goes back to 
Lojasiewicz \cite{Lojasiewicz} and tempered functions already appear
in the work of B. Malgrange~\cite{Malgrange,Malgrangeideals}. 
 However, the first general definition 
of a tempered distribution on
any open set $U$ in some manifold $M$ is due to M. Kashiwara, a distribution is tempered
if it is extendible
on $\overline{U}$ \cite[Lemma 3.2 p.~332]{KashiwaraRiemannHilbert} (see also \cite{CHM}) which implies by our Theorem  \ref{mainthm} that these distributions are in 
$\mathcal{T}_{M\setminus \partial U}$ i.e. have moderate growth
along $\partial U$.
His approach was then extended in \cite{GuillermouSchapira,KSmodcohom,KSindsheaves}.  
Tempered functions and distributions were also recently 
studied in the context of real algebraic geometry
\cite{AizenbudGourevitch, CHM} with applications in representation
theory.
More recently, a different approach to the extension problem
in terms of scaling was developped by Meyer in his book \cite{Meyer-98}, his purpose
was to study the singular behaviour at given points of irregular functions with applications
in multifractal analysis \cite{JaffardMeyer}. 
Our goal in this paper
is to revive some techniques in analysis originally developped by H. Whitney \cite{Whitney} which were
then improved by Malgrange and Lojasiewicz, to compare these techniques with the approach
by scaling of Meyer \cite{Dangthese,Meyer-98} and finally show their relevance 
in solving the problem of constructing a perturbative quantum field theory on a Riemannian manifold.  

\paragraph{Acknowledgements.}

I would like to thank Christian Brouder, Fr\'ed\'eric H\'elein,
Stefan De Bi\`evre, Laura Desideri, Camille Laurent Gengoux, Mathieu Sti\'enon
for useful discussions and the Labex CEMPI for excellent working conditions.

\section{The extension of distributions.}

\subsection{Proof of Theorem \ref{mainthm}.}\label{Partitionofunityargument}
\paragraph{Localization on open charts by a partition of unity.}

We shall reduce the proof of $(1)\Leftrightarrow (2)$
in Theorem \ref{mainthm}
to the case where $M=\mathbb{R}^n$, $X$ is a compact set
contained in a larger compact $K$ 
and $t\in\mathcal{D}^\prime(\mathbb{R}^n\setminus X)$
vanishes outside $K$, this condition reads
$t\in\mathcal{D}^\prime_K(\mathbb{R}^n\setminus X)$.
The first step is to localize the problem
by a partition of unity.
Choose a locally finite cover  of $M$
by relatively compact open charts $(U_i)_i$ 
and a subordinated 
partition of unity $(\varphi_i)_i$ s.t. $\sum \varphi_i=1$.
Denote by $t_i$ the restriction $t|_{U_i}$ and $K_i=\text{supp }\varphi_i\subset U_i$.
For all $\varphi\in\mathcal{D}(U)$, $t\in\mathcal{D}^\prime(U\setminus X)$ 
has moderate growth implies 
the same property for $t\varphi\in\mathcal{D}^\prime(U\setminus X)$, 
therefore each $t\varphi_i|_{U_i\setminus X}$ is in $\mathcal{D}_{K_i}^\prime(U_i\setminus (X\cap K_i))$, 
$t\varphi_i$ vanishes outside $K_i$ 
and has moderate growth along $X$.  
Hence
it suffices to extend $t\varphi_i|_{U_i\setminus X}$ in each $U_i$
in such a way that the extension is supported
by $K_i$.
Call
$\overline{t_i\varphi_i}$ such extension in
$\mathcal{E}^\prime(U_i)$
then the
locally finite sum 
$\overline{t}=\sum_i \overline{t_i\varphi_i}\in\mathcal{D}^\prime(M)$
is a well defined extension of $t$.
\paragraph{Working on $\mathbb{R}^n$.}
The second step is to use local charts to work on $\mathbb{R}^n$.
On every open set $(U_i)$,
let $\psi_i:U_i \longmapsto V\subset\mathbb{R}^n$ 
denote the corresponding chart
then the pushforward $\psi_{i*}(t\varphi_i)$ 
is in $\mathcal{D}_{\psi_i(K_i)}^\prime(V\setminus \psi_i(X\cap K_i))$.
Actually the compact set $\psi_i(X\cap K_i)$ is in the
interior of $V$, since
$(K_i\cap X)\subset  
\text{int}(U_i)$ and $\psi_i$ is a diffeomorphism.
Therefore
the distribution $\psi_{i*}(t\varphi_i)$ is an element
of $\mathcal{D}_{K_i}^\prime(\mathbb{R}^n\setminus \psi_i(X\cap K_i)) $
and
we may reduce the proof of our theorem 
to the case where we have a distribution
$t\in \mathcal{D}_K^\prime(\mathbb{R}^n\setminus X)$
with moderate growth along $X$
where $X\subset K$ are \textbf{compact subsets} of $\mathbb{R}^n$. 
In the sequel, we use the seminorms
$\Vert\varphi\Vert_m=\sup_{x\in \mathbb{R}^n,\vert\alpha\vert\leqslant m}\vert \partial_x^\alpha\varphi(x)\vert$
and
$\Vert\varphi\Vert_m^K=\sup_{x\in K ,\vert\alpha\vert\leqslant m}\vert \partial_x^\alpha\varphi(x)\vert$
where $K$ runs over the compact
subsets of $\mathbb{R}^n$.
Let
$\mathcal{I}(X,\mathbb{R}^n)=\{\varphi \text{ s.t. supp }\varphi\cap X=\emptyset \}\subset C^\infty(\mathbb{R}^n)$, 
since $t$ vanishes outside some compact set $K$,
the moderate growth condition
now reads
\begin{eqnarray}\label{moderategrowthcompactsupport}
\exists (C,s)\in\mathbb{R}_{\geqslant 0}^2\text{ and }\Vert .\Vert_{m}^K \text{ s.t. }  \forall \varphi\in \mathcal{I}(X,\mathbb{R}^n),
\vert t(\varphi)\vert\leqslant
C(1+d(\text{supp }\varphi,X)^{-s})\Vert\varphi\Vert^K_m.
\end{eqnarray}

\begin{thm}\label{euclideanmainthm}
Let $X\subset K$ be compact subsets of $\mathbb{R}^n$, 
then
$t\in \mathcal{D}_K^\prime(\mathbb{R}^n\setminus X)$ 
is extendible in $\mathcal{D}_K^\prime(\mathbb{R}^n)$
if and only if $t$ has moderate growth along $X$.
\end{thm}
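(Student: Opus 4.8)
The easy direction is $(\Leftarrow)$ reversed, i.e. *extendible implies moderate growth*: if $\overline t\in\mathcal D'_K(\mathbb R^n)$ restricts to $t$, then $\overline t$ satisfies the standard seminorm estimate $|\overline t(\varphi)|\le C\|\varphi\|^K_m$ for all $\varphi\in\mathcal D(\mathbb R^n)$; restricting to $\varphi$ with $\operatorname{supp}\varphi\cap X=\emptyset$ gives \eqref{moderategrowthcompactsupport} with $s=0$, so the divergent factor is harmless. So the content is the forward implication: *moderate growth implies extendible*.

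For the hard direction my plan is to build the extension by a dyadic cutoff near $X$ together with a subtraction of Taylor-type counterterms supported on $X$, following the Whitney–Łojasiewicz–Malgrange circle of ideas and mirroring the point-case arguments of Meyer and Brunetti–Fredenhagen. First I would fix a smooth function $\chi$ that is $1$ near $X$ and supported in a small neighbourhood, and for $\lambda\in(0,1]$ set $\chi_\lambda(x)=\chi\big(d(x,X)/\lambda\big)$ (smoothed), so that $\beta_\lambda:=1-\chi_\lambda$ vanishes near $X$ and $\beta_\lambda\to1$. The candidate extension is $\lim_{\lambda\to0}\big(t\beta_\lambda-c_\lambda\big)$ for suitable $c_\lambda\in\mathcal E'(X)$; concretely one tests against $\varphi\in\mathcal D(\mathbb R^n)$, writes $t(\beta_\lambda\varphi)$, and must show that after subtracting a finite Taylor polynomial of $\varphi$ transverse to $X$ (the counterterm) the limit exists. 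The number of Taylor terms one needs is governed by the exponent $s$ in the moderate-growth estimate \eqref{moderategrowthcompactsupport}: if $N>s$ then replacing $\varphi$ by $\varphi$ minus its $N$-th order jet along $X$ produces a factor $d(\operatorname{supp}\cdot,X)^{N}$ that beats $d^{-s}$, making the dyadic pieces summable. The counterterm $c_\lambda$ collects exactly the contributions of the subtracted jet, and by construction it is a distribution supported on $X$ (a sum of derivatives of densities on $X$), which is also what delivers statement $(3)$ of Theorem \ref{mainthm} simultaneously.

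Quantitatively, the core estimate is the following: writing the annular regions $A_j=\{x: 2^{-j-1}\le d(x,X)\le 2^{-j}\}$ and $\theta_j$ a partition of unity adapted to them, one has $t=\sum_j t\theta_j$ with $\operatorname{supp}(t\theta_j\varphi)\subset A_j$, and the moderate-growth bound gives $|t(\theta_j\varphi)|\le C\,2^{js}\,\|\theta_j\varphi\|^K_m$. Since $|\partial^\alpha\theta_j|\lesssim 2^{j|\alpha|}$ on $A_j$, a crude bound on $\|\theta_j\varphi\|^K_m$ costs another $2^{jm}$, giving $|t(\theta_j\varphi)|\lesssim 2^{j(s+m)}\|\varphi\|^K_{?}$, which is not summable; this is why the Taylor subtraction is essential — it replaces $\|\varphi\|$ by $\sup_{A_j}|\varphi-J^N_X\varphi|\lesssim 2^{-j(N+1)}\|\varphi\|^K_{N+1}$ (and similarly for derivatives), producing $|t(\theta_j(\varphi-J^N_X\varphi))|\lesssim 2^{j(s+m-N-1)}\|\varphi\|^K_{N+1+m}$, which sums when $N\ge s+m$. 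The series $\sum_j t\big(\theta_j(\varphi-J^N_X\varphi)\big)$ then defines a distribution on all of $\mathbb R^n$, supported in $K$, continuous in $\varphi$ for the $C^{N+m+1}$ topology, and agreeing with $t$ on test functions supported away from $X$ (where $J^N_X\varphi$ contributes nothing). Finally I would identify this with $\lim_\lambda(t\beta_\lambda-c_\lambda)$ by a resummation of the dyadic pieces, reading off $c_\lambda$ as the explicit counterterm.

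The main obstacle is making the Taylor subtraction along $X$ rigorous when $X$ is merely a \emph{compact set}, not a submanifold: there is no canonical normal jet, so one cannot literally speak of $J^N_X\varphi$. The device to get around this — and the technical heart of the proof — is Whitney's extension/jet machinery together with Łojasiewicz-type inequalities: one works with a Whitney field, or equivalently uses the smooth function $d(\cdot,X)$ and suitable neighbourhood-retraction data to define an approximate jet whose error near $X$ still decays like a positive power of $d(\cdot,X)$. Controlling the seminorms of this approximate subtraction uniformly on the annuli $A_j$, and checking that the accumulated counterterms genuinely assemble into an element of $\mathcal E'(X)$ rather than leaking off $X$, is where the real work lies; everything else is bookkeeping with the geometric series in $2^{-j}$.
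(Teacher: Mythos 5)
Your easy direction (extendible implies moderate growth with $s=0$) is correct and matches the paper. For the hard direction, however, you take a genuinely different route, and it contains a gap that the paper's proof is specifically designed to avoid.

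The paper does \emph{not} subtract Taylor counterterms at this stage and never needs to define a jet of a test function along the compact set $X$. Its observation is that to get an extension in $\mathcal{D}'_K(\mathbb{R}^n)$, it suffices to prove the plain continuity estimate
$\vert t(\varphi)\vert \leqslant C'\Vert\varphi\Vert^{K}_{m'}$
for $\varphi\in\mathcal{I}(X,\mathbb{R}^n)$, i.e.\ for test functions whose support is disjoint from $X$, and then invoke Hahn--Banach on the subspace $\mathcal{I}(X,\mathbb{R}^n)\subset C^\infty(\mathbb{R}^n)$. For such $\varphi$, the dyadic decomposition $t(\varphi)=\sum_{j=0}^{N-1} t((\chi_{2^{-j}}-\chi_{2^{-j-1}})\varphi)+t((1-\chi_1)\varphi)$ is a \emph{finite} sum, and the decaying factor that beats the $2^{js}$ from moderate growth comes for free: $\varphi$ already vanishes in a neighborhood of $X$, hence lies in $\mathcal{I}^{m+d}(X,\mathbb{R}^n)$ for every $d$, and Taylor's remainder estimate (the content of Lemma~\ref{Malgrangestyletechnicallemma}) gives $\Vert\chi_\lambda\varphi\Vert^K_m\lesssim\lambda^d\Vert\varphi\Vert^{K\cap\{d\leqslant\lambda\}}_{m+d}$. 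The Leibniz loss $\lambda^{-\vert k\vert}$ from differentiating $\chi_\lambda$ is cancelled by the vanishing of $\partial^{\alpha-k}\varphi$ at order $\vert k\vert+d$ on $X$; there is nothing to subtract. One then picks $d>s$ and obtains a bound $C'$ independent of $N$ and of $\varphi$.

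Your plan instead tests $t$ against an arbitrary $\varphi\in\mathcal{D}(\mathbb{R}^n)$, subtracts an approximate jet $J^N_X\varphi$, and tries to build the extension directly as $\sum_j t\bigl(\theta_j(\varphi-J^N_X\varphi)\bigr)$. You correctly flag the obstacle: for a compact set $X$ that is not a submanifold there is no canonical normal jet, and constructing one with the required decay in powers of $d(\cdot,X)$ is precisely the Whitney extension problem. You gesture at Whitney fields and Łojasiewicz inequalities but do not carry this out, and for good reason — it is the hard part. The paper circumvents it entirely for Theorem~\ref{euclideanmainthm} by only ever estimating $t$ on functions that are already flat near $X$, and defers the Whitney machinery to the later Theorem~\ref{Whitneythm}, where it is used to upgrade mere existence of an extension to a \emph{linear} extension operator. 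The counterterm and scaling picture you sketch is exactly what appears afterwards in Proposition~\ref{counterterm}, but it presupposes a renormalization scheme, i.e.\ a choice of continuous splitting of the Whitney exact sequence, which is a nontrivial input you cannot produce from scratch in a few lines. In short: your approach, completed rigorously, would in effect reprove Whitney's extension theorem; the paper's Hahn--Banach argument is strictly weaker in its conclusions (non-constructive existence only) but correspondingly much cheaper, and that is all the present theorem requires.
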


\begin{proof}
We first
prove a weaker equivalence:
$t$ is extendible iff the estimate 
(\ref{moderategrowthcompactsupport}) holds with $s=0$.

Assume the problem is solved and that we could 
find an extension
$\overline{t}\in\mathcal{D}_K^\prime(\mathbb{R}^n)$ of $t$.
Observe that $\forall\varphi\in V, t(\varphi)=\overline{t}(\varphi)$
then by definition 
$\overline{t}$ is a linear continuous functional
on $C^\infty(\mathbb{R}^n)$ equipped with the Fr\'echet topology, thus it induces
a linear continuous map
on the vector subspace
$\mathcal{I}(X,\mathbb{R}^n)\subset C^\infty(\mathbb{R}^n)$:
\begin{eqnarray*} 
\exists C \in\mathbb{R}_{\geqslant 0}, \Vert .\Vert_{m}^K \text{ s.t. }  \forall \varphi\in \mathcal{I}(X,\mathbb{R}^n),\,\
\vert t(\varphi)\vert=\vert \overline{t}(\varphi)\vert\leqslant
C\Vert\varphi\Vert_m^K.
\end{eqnarray*}
Therefore, if $t$ is extendible then
estimate (\ref{moderategrowthcompactsupport}) is satisfied
with $s=0$ and $t$ has moderate growth along $X$.

Conversely, if 
$\exists C \in\mathbb{R}_{\geqslant 0}, \Vert .\Vert_{m}^K \text{ s.t. }
 \forall \varphi\in \mathcal{I}(X,\mathbb{R}^n),
\vert t(\varphi)\vert\leqslant C
\Vert\varphi\Vert^K_m$, then by the Hahn--Banach
theorem \cite[Thm 6.4 p.~46]{VogtMeise}, we can extend $t$ as a linear continuous
mapping $\overline{t}$ on $C^\infty(\mathbb{R}^n)$ which satisfies the above estimate
hence $\overline{t}\in \mathcal{D}_K^\prime(\mathbb{R}^n)$.
Therefore to prove that $t$ has \emph{moderate growth} implies 
that $t$ is extendible in $\mathcal{D}^\prime_K(\mathbb{R}^n)$, 
it suffices to show that
\begin{eqnarray*}
&&\exists C \in\mathbb{R}_{\geqslant 0}, \Vert .\Vert_{m}^K \text{ s.t. }\forall \varphi\in \mathcal{I}(X,\mathbb{R}^n),
\vert t(\varphi)\vert\leqslant
C(1+d(\text{supp }\varphi,X)^{-s})\Vert\varphi\Vert_m^K\\
 &\implies & \exists C^\prime \in\mathbb{R}_{\geqslant 0}, \Vert .\Vert_{m^\prime}^{K} \text{ s.t. }
\forall \varphi\in \mathcal{I}(X,\mathbb{R}^n),\vert t(\varphi)\vert\leqslant
C^\prime\Vert\varphi\Vert_{m^\prime}^{K}.
\end{eqnarray*}

Let us admit the following central technical Lemma 
whose proof will be given later:
\begin{lemm}\label{Malgrangestyletechnicallemma}
For every integer $d\in\mathbb{N}$, let 
$\mathcal{I}^{m+d}(X,\mathbb{R}^n)$ denote
the closed ideal
of functions of regularity $C^{m+d}$ 
which vanish at order $m+d$ on $X$.
Then there is a function $\chi_\lambda\in C^\infty(\mathbb{R}^n)$
parametrized by $\lambda\in(0,1]$
s.t. $\chi_\lambda=1$ (resp $\chi_\lambda=0$) when $d(x,X)\leqslant\frac{\lambda}{8}$
(resp $d(x,X)\geqslant\lambda$)
\begin{eqnarray}
\exists \tilde{C}, \forall\lambda\in(0,1], \forall \varphi \in \mathcal{I}^{m+d}(X,\mathbb{R}^n),
\Vert \chi_\lambda \varphi\Vert^K_m\leqslant \tilde{C}\lambda^d  \Vert \varphi\Vert^{K\cap\{d(x,X)\leqslant \lambda\}}_{m+d} 
\end{eqnarray} 
where the constant $\tilde{C}$ does not depend on $\varphi,\lambda$.
\end{lemm}

If $s=0$, then we know that 
there is an extension by Hahn Banach therefore
we shall treat the case where $s>0$. Our idea	
is to absorb the divergence
by a dyadic decomposition:
\begin{eqnarray*}
&&\forall \varphi\in \mathcal{I}(X,\mathbb{R}^n),\exists N
\text{ s.t. } \chi_{2^{-N}}\varphi=0\\
&\implies & t(\varphi)=t((1-\chi_{2^{-N}})\varphi)\\
&\implies &t(\varphi)=\sum_{j=0}^{N-1}t((\chi_{2^{-j}}-\chi_{2^{-j-1}})\varphi)+t((1-\chi_{1})\varphi )
\end{eqnarray*}
We easily estimate $t((1-\chi_{1})\varphi )$:
$\forall\varphi\in C^\infty(\mathbb{R}^n),\vert t((1-\chi_{1})\varphi )\vert\leqslant C\Vert\varphi\Vert_m$
for some constant $C$ since
the support  of $1-\chi_{1}$ does not meet $X$. 
Choose $d\in\mathbb{N}^*$ such that
$d-s>0$, then:
\begin{eqnarray*}
\vert t(\chi_{1}\varphi ) \vert
&\leqslant &
\sum_{j=0}^{N-1}\vert t((\chi_{2^{-j}}-\chi_{2^{-j-1}})\varphi)\vert\\
&\leqslant & C\sum_{j=1}^{N} (1+d(\text{supp }\varphi(\chi_{2^{-j}}-\chi_{2^{-j-1}}),X)^{-s})
\Vert (\chi_{2^{-j}}-\chi_{2^{-j-1}})\varphi \Vert_m^K\text{, by moderate growth } \\
&\leqslant & C\sum_{j=1}^{N} (1+2^{s(j+4)})(2^{-jd}+2^{-(j+1)d})\tilde{C}\Vert\varphi \Vert^K_{m+d}\text{, by the technical Lemma } \\ 
&\leqslant & C^\prime \Vert\varphi \Vert^K_{m+d}
\end{eqnarray*}
for $C^\prime=\tilde{C}C(1+2^{-d})\underset{\text{convergent series since }d-s>0}{\underbrace{\sum_{j=1}^{\infty} 2^{-jd}(1+2^{(j+4)s})}} <+\infty$
which is independent of $N$ and $\varphi$.  
\end{proof}
We now prove Lemma \ref{Malgrangestyletechnicallemma}:
\begin{proof}
Choose $\phi\geqslant 0$ s.t. $\int\phi=1$, 
$\phi=0$ if $\vert x\vert\geqslant \frac{3}{8}$
then set $\phi_\lambda=\lambda^{-n}\phi(\lambda^{-1}.)$ 
and 
$\alpha_\lambda$ to be the characteristic function
of the set $\{x \text{ s.t. }d(x,X)\leqslant \frac{\lambda}{2} \}$ 
then the convolution
product $\phi_\lambda*\alpha_\lambda(x)=1$ 
if $d(x,X)\leqslant \frac{\lambda}{8}$ and equals $0$
if $d(x,X)\geqslant \lambda$.
Since by Leibniz rule
one has $\partial^\alpha (\chi_\lambda\varphi)(x)=\underset{\vert k\vert\leqslant \vert\alpha\vert}\sum\left(
\begin{array}{c}
\alpha\\
k
\end{array}
\right) \partial^k\chi_\lambda\partial^{\alpha-k}\varphi(x) $,
it suffices to estimate
each term
$\partial^k\chi_\lambda\partial^{\alpha-k}\varphi(x)$
of the above sum.

For all multi-index $k$, there is some constant
$C_k$ such that
$\forall x\in \mathbb{R}^n\setminus X, \vert \partial_x^k 
\chi_\lambda \vert\leqslant
\frac{C_k}{\lambda^{\vert k\vert}}$
and $\text{supp }\partial_x^k 
\chi_\lambda\subset\{d(x,X)\leqslant \lambda \}$. 
Therefore for all $\varphi\in\mathcal{I}^{m+d}(X,\mathbb{R}^n)$, 
for all
$x\in \text{supp }\partial_x^k \chi_\lambda\partial^{\alpha-k}\varphi$,
for $y\in X$ such that
$d(x,X)=\vert x-y\vert$, we find that $\partial^{\alpha-k}\varphi$
vanishes at $y$ at order $\vert k\vert+d $ therefore:
$$\partial_x^{\alpha-k}\varphi(x) =\sum_{\vert\beta\vert=\vert k\vert+d}
(x-y)^\beta R_\beta(x)$$
where the right hand side
is just the integral remainder
in Taylor's expansion of $\partial^{\alpha-k}\varphi$ around $y$.
Hence:
$$ \vert\partial^k\chi_\lambda\partial^{\alpha-k}\varphi(x)   \vert
\leqslant \frac{C_k}{\lambda^{\vert k\vert}}\sum_{\vert\beta\vert=\vert k\vert+d}
\vert (x-y)^\beta R_\beta(x)\vert  .$$
 It is easy to see 
that $R_\beta$ only
depends on the Jets 
of $\varphi$ of order $\leqslant m+d$.
Hence
$$\vert\partial^k\chi_\lambda\partial^{\alpha-k}\varphi(x)   \vert
\leqslant 
C_{k} \lambda^d \sup_{x\in K, d(x,X)\leqslant \lambda}\sum_{\vert\beta\vert=\vert k\vert+d}
\vert R_\beta(x)\vert$$
and the conclusion follows easily.
\end{proof}

Our partition of unity argument together with the
result of Theorem \ref{euclideanmainthm}
imply that $(1)\Leftrightarrow (2)$
in Theorem \ref{mainthm}.
\subsection{Renormalizations and the Whitney extension Theorem.}

The goal of this subsection is to replace the use 
of Hahn Banach theorem by a more constructive argument.
First, we discuss a particular case of extension
where there is some canonical choice for $\overline{t}$.

\paragraph{Remark on the extension of positive measures with locally finite mass.}

The following
proposition is inspired by some results of Skoda \cite{Skoda}. 
Let $\mu$ be a positive measure in $M\setminus X$,
then we say that $\mu$ has locally finite mass if:
\begin{eqnarray*}
\forall K\subset M \text{ compact },\exists C_K,
\forall\varphi\in \mathcal{D}_K(M\setminus X),
\varphi\geqslant 0,
0\leqslant\mu(\varphi) \leqslant C_K\Vert\varphi\Vert_0. 
\end{eqnarray*}
\begin{prop}
Let $\mu$ be a positive measure in $M\setminus X$.
If
$\mu$ has locally finite mass then $\mu$ has a canonical extension in the space of positive measures.
\end{prop}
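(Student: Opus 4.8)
The plan is to construct the extension $\overline{\mu}$ by exhausting $M\setminus X$ with an increasing sequence of compact sets and passing to the monotone limit, then showing the resulting linear functional is continuous on $\mathcal{D}(M)$ with the same local bounds, so that it defines a positive Radon measure extending $\mu$.

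First I would fix a compact $K\subset M$ and an exhaustion: choose a sequence of cut-off functions $\theta_k\in\mathcal{D}(M\setminus X)$ with $0\leqslant\theta_k\leqslant 1$ and $\theta_k\nearrow 1$ pointwise on $M\setminus X$ (e.g. $\theta_k$ supported where $d(x,X)\geqslant 1/k$ and equal to $1$ where $d(x,X)\geqslant 2/k$). For $\varphi\in\mathcal{D}_K(M)$ with $\varphi\geqslant 0$, the sequence $\mu(\theta_k\varphi)$ is nondecreasing and bounded above by $C_K\Vert\varphi\Vert_0$ using the locally finite mass hypothesis, since $\theta_k\varphi\in\mathcal{D}_K(M\setminus X)$, $\theta_k\varphi\geqslant 0$, and $\Vert\theta_k\varphi\Vert_0\leqslant\Vert\varphi\Vert_0$. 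Hence the limit $\overline{\mu}(\varphi):=\lim_{k\to\infty}\mu(\theta_k\varphi)$ exists and satisfies $0\leqslant\overline{\mu}(\varphi)\leqslant C_K\Vert\varphi\Vert_0$. I would then check this limit is independent of the chosen exhaustion: given two such sequences $(\theta_k)$ and $(\theta_k')$, a standard diagonal/interlacing argument (or monotone convergence against the measure $\mu$ itself, viewing $\mu$ as an honest Borel measure on $M\setminus X$) shows the two monotone limits agree. This canonical character is what the statement is really claiming.

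Next I would extend $\overline{\mu}$ to all of $\mathcal{D}_K(M)$ by linearity after decomposing a general real test function as a difference of nonnegative ones, $\varphi=\varphi_+-\varphi_-$; a little care is needed because $\varphi_\pm$ need not be smooth, so instead I would argue that for any $\varphi\in\mathcal{D}_K(M)$ one has $\Vert\varphi\Vert_0\pm\varphi\geqslant 0$ on $K$, apply the bound to these, and deduce $|\overline{\mu}(\varphi)|\leqslant C_K\Vert\varphi\Vert_0$ directly, first on functions of the form $\theta_k\varphi$ and then in the limit. This gives a linear functional on $\mathcal{D}(M)$ bounded by $\Vert\cdot\Vert_0$ on each $\mathcal{D}_K(M)$, hence a distribution of order zero, i.e. a Radon measure; positivity is inherited from the nonnegative case. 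That $\overline{\mu}$ restricts to $\mu$ on $M\setminus X$ is immediate: if $\mathrm{supp}\,\varphi\cap X=\emptyset$ then $\theta_k\varphi=\varphi$ for $k$ large, so $\overline{\mu}(\varphi)=\mu(\varphi)$.

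The main obstacle, and the only genuinely delicate point, is the well-definedness (canonicity) of the monotone limit independently of the exhausting family $(\theta_k)$ — everything else is bookkeeping. The cleanest way around it is to invoke the fact that a positive measure on the locally compact space $M\setminus X$ with locally finite mass is a Radon measure there, so $\overline{\mu}(\varphi)=\int_{M\setminus X}\varphi\,d\mu$ is forced by monotone convergence regardless of which $\theta_k\nearrow 1$ we use; the local finiteness bound then transfers to the inequality $\int_{M\setminus X}|\varphi|\,d\mu\leqslant C_K\Vert\varphi\Vert_0$, which is exactly what is needed to see the extended object is a measure on all of $M$. One should also remark that in general this extension is not unique as a distribution (one may always add a measure supported on $X$), but it is the unique one that is still a \emph{positive} measure with no mass charged on $X$ beyond what continuity forces — this is the sense in which it is canonical.
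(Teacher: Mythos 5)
Your proof is correct and follows essentially the same approach as the paper: a monotone increasing family of cutoffs vanishing near $X$, bounded uniformly via the locally finite mass hypothesis, yielding a well-defined limit on nonnegative test functions and then extending by linearity. The only minor differences are that the paper first regularizes $\mu$ to $C^0_c(M\setminus X)$ so it can split $\varphi=\varphi_+-\varphi_-$ within $C^0_c$, whereas you use the $\Vert\varphi\Vert_0\pm\varphi$ trick to stay in $\mathcal{D}$, and you spell out explicitly why the limit is independent of the exhaustion (the canonicity), which the paper leaves implicit by fixing the specific family $\chi_{2^{-n}}$.
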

\begin{proof}
By an obvious regularization argument, we can extend
$\mu$ to the space $C^0_c(M\setminus X)$
of compactly supported functions of regularity $C^0$.
Choose a family $\chi_\lambda$ as in 
the main technical Lemma \ref{Malgrangestyletechnicallemma} which
satisfies $\chi_\lambda\geqslant 0, \chi_\lambda=1$ if $d(x,X)\leqslant \frac{\lambda}{8}$
and $\chi_\lambda=0$ when $d(x,X)\geqslant\lambda$.
Then for all $\varphi\in C^0_c(M), \varphi\geqslant 0$, the sequence 
$\mu((1-\chi_{2^{-n}})\varphi)_n$ is increasing and bounded by 
$C_K\Vert\varphi \Vert_0$ where $K$ is any compact set
which contains the support of $\varphi$. 
Therefore for each $\varphi\geqslant 0$,
$\lim_{n\rightarrow +\infty}\mu((1-\chi_{2^{-n}})\varphi)$
exists. It is easy to conclude using the fact
that $C^0_c(M)$ is spanned by non negative
functions. 
\end{proof}

\paragraph{Constructive extension operator instead of Hahn Banach.}

Recall we denote by $\mathcal{I}(X,\mathbb{R}^n)$
the smooth functions vanishing in some neighborhood of $X$.
In the proof of Theorem \ref{euclideanmainthm}, we
showed that if $t$ were extendible equivalently 
if $t$ satisfies the moderate growth condition then:
\begin{eqnarray}\label{tcontinuitycm}
 \exists (C,m) , \forall\varphi\in \mathcal{I}(X,\mathbb{R}^n) ,
\vert t(\varphi) \vert \leqslant C \Vert\varphi\Vert^K_{m}.
\end{eqnarray}
Therefore $t$ defines a linear functional on 
$\mathcal{I}(X,\mathbb{R}^n)$
for the induced topology of $C^\infty(\mathbb{R}^n)$ and can
be extended by
Hahn Banach which
is a non constructive argument
and \textbf{does not imply} the existence of a linear extension operator
$t\in\mathcal{D}_K^\prime(\mathbb{R}^n\setminus X)\longmapsto \overline{t}\in\mathcal{D}_K^\prime(\mathbb{R}^n)$.

Denote by
$\mathcal{I}^{m}(X,\mathbb{R}^n)$
the space of $C^{m}$ functions
which vanish on $X$ together with all their derivatives
of order less than $m$, $\mathcal{I}^{m}(X,\mathbb{R}^n)$ 
is a closed ideal 
in $C^m(\mathbb{R}^n)$. 
To construct a linear extension operator, we have
to prove first that $t$ extends by continuity to some element 
$t_m$
in the topological dual $\mathcal{I}^{m}(X,\mathbb{R}^n)^\prime$ 
of $\mathcal{I}^{m}(X,\mathbb{R}^n)\subset
C^m(\mathbb{R}^n)$.
\begin{lemm}
$t$ satisfies (\ref{tcontinuitycm}) if and only if
$t$ uniquely extends by continuity to an element
$t_m$ in $\mathcal{I}^{m}(X,\mathbb{R}^n)^\prime$:
\begin{eqnarray}
\forall\varphi\in \mathcal{I}^{m}(X,\mathbb{R}^n), t_m(\varphi)
&=&
\lim_{\lambda\rightarrow 0}\lim_{\varepsilon\rightarrow 0}t((1-\chi_\lambda)\phi_\varepsilon*\varphi)
\end{eqnarray}
for the family of cut--off functions 
$(\chi_\lambda)_\lambda$ defined in Lemma \ref{Malgrangestyletechnicallemma}
and a mollifier $\phi_\varepsilon$.
\end{lemm}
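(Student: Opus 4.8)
The plan is to prove the two implications separately. The easy direction is that if $t$ extends by continuity to some $t_m \in \mathcal{I}^m(X,\mathbb{R}^n)'$, then (\ref{tcontinuitycm}) holds: indeed $\mathcal{I}(X,\mathbb{R}^n) \subset \mathcal{I}^m(X,\mathbb{R}^n)$ with continuous inclusion for the $C^m$ topology (a function vanishing in a neighborhood of $X$ certainly vanishes to order $m$ on $X$), so restricting the continuous functional $t_m$ to $\mathcal{I}(X,\mathbb{R}^n)$ and using that $t_m$ is bounded by some $\Vert\cdot\Vert^K_m$-seminorm gives exactly the estimate (\ref{tcontinuitycm}) with the same $m$ and $C$.

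For the converse, assume (\ref{tcontinuitycm}). First I would check that the iterated limit $\lim_{\lambda\to 0}\lim_{\varepsilon\to 0} t((1-\chi_\lambda)\phi_\varepsilon * \varphi)$ makes sense for $\varphi \in \mathcal{I}^m(X,\mathbb{R}^n)$. For fixed $\lambda$, the function $(1-\chi_\lambda)\varphi$ is a $C^m$ function supported away from $X$ (on $\{d(x,X)\geqslant \lambda/8\}$), so $(1-\chi_\lambda)\phi_\varepsilon*\varphi$ is smooth, supported away from $X$ for $\varepsilon$ small, and converges to $(1-\chi_\lambda)\varphi$ in $C^m$ as $\varepsilon\to 0$; applying (\ref{tcontinuitycm}) (which extends $t$ to a functional on the $C^m$-closure of $\mathcal{I}(X,\mathbb{R}^n)$, in particular to all $C^m$ functions supported away from $X$) shows $\lim_{\varepsilon\to 0} t((1-\chi_\lambda)\phi_\varepsilon*\varphi) = t((1-\chi_\lambda)\varphi)$ exists. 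It remains to show $\lim_{\lambda\to 0} t((1-\chi_\lambda)\varphi)$ exists; writing this as a telescoping dyadic sum $t((1-\chi_1)\varphi) + \sum_{j\geqslant 0} t((\chi_{2^{-j}} - \chi_{2^{-j-1}})\varphi)$, I would bound the $j$-th term using (\ref{tcontinuitycm}) together with Lemma \ref{Malgrangestyletechnicallemma}: since $\varphi\in\mathcal{I}^m(X,\mathbb{R}^n)$ vanishes to order $m$ on $X$, the lemma (applied with $d=0$, or a small positive $d$) gives $\Vert(\chi_{2^{-j}}-\chi_{2^{-j-1}})\varphi\Vert^K_m \leqslant \tilde C \Vert\varphi\Vert^{K\cap\{d(x,X)\leqslant 2^{-j}\}}_m$, and since the shells $\{d(x,X)\leqslant 2^{-j}\}$ shrink to $X$ where the $C^m$-jet of $\varphi$ vanishes, this tends to $0$ — so the series converges absolutely and the limit exists. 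This defines $t_m(\varphi)$; linearity is clear, and continuity (hence $t_m\in\mathcal{I}^m(X,\mathbb{R}^n)'$) follows because each partial sum is bounded by a fixed multiple of $\Vert\varphi\Vert^K_m$ by the same estimates, uniformly in $\varphi$.

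Uniqueness of the extension is immediate: $\mathcal{I}(X,\mathbb{R}^n)$ is dense in $\mathcal{I}^m(X,\mathbb{R}^n)$ for the $C^m$ topology (truncating by $1-\chi_\lambda$ and mollifying approximates any element, as the argument above shows), so any continuous functional on $\mathcal{I}^m(X,\mathbb{R}^n)$ restricting to $t$ on $\mathcal{I}(X,\mathbb{R}^n)$ is determined. The main obstacle is the quantitative control in the telescoping step: one must invoke Lemma \ref{Malgrangestyletechnicallemma} correctly — the lemma is stated for $\varphi\in\mathcal{I}^{m+d}(X,\mathbb{R}^n)$ with a gain $\lambda^d$, so to apply it to a generic $\varphi\in\mathcal{I}^m(X,\mathbb{R}^n)$ one should take $d=0$ and then extract decay not from a power of $\lambda$ but from the vanishing of the $(m)$-jet of $\varphi$ on $X$, i.e. from $\Vert\varphi\Vert^{K\cap\{d(x,X)\leqslant\lambda\}}_m \to 0$ as $\lambda\to 0$ (continuity of the derivatives of $\varphi$ up to order $m$ on the compact $K$, together with their vanishing on $X$). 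Handling the density/approximation claim and the interchange of the two limits carefully is the only real subtlety; everything else is a routine application of the already-established Lemma \ref{Malgrangestyletechnicallemma} and estimate (\ref{tcontinuitycm}).
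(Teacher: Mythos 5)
Your overall route matches the paper's: both rest on the observation that $\mathcal{I}(X,\mathbb{R}^n)$ is dense in $\mathcal{I}^{m}(X,\mathbb{R}^n)$ for the $C^m$ topology, which is exactly the content of the two-step mollifier-plus-cutoff approximation and the key estimate $\Vert\chi_\lambda\varphi\Vert^K_m \leqslant \tilde{C}\Vert\varphi\Vert^{K\cap\{d(x,X)\leqslant\lambda\}}_m \to 0$ from the technical lemma with $d=0$. The paper packages this as a pure density statement — once density is established, existence, uniqueness and the displayed formula for $t_m$ all follow at once from the standard extension-by-continuity principle — whereas you rederive the limit more by hand through a dyadic telescoping sum, which is less economical but essentially the same information.

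There is one genuine gap in the telescoping step. From $\Vert(\chi_{2^{-j}}-\chi_{2^{-j-1}})\varphi\Vert^K_m \leqslant \tilde{C}\Vert\varphi\Vert^{K\cap\{d(x,X)\leqslant 2^{-j}\}}_m \to 0$ you conclude ``the series converges absolutely'' — but terms tending to $0$ does not give summability, and nothing you have written controls the rate of decay well enough to sum (with $d=0$ there is no $\lambda^d$ gain, unlike in the proof of Theorem~\ref{euclideanmainthm} where $d>s$ is deliberately chosen). Absolute convergence is in fact not needed: the correct observation is that the sequence $t\bigl((1-\chi_{2^{-N}})\varphi\bigr)$ is Cauchy, because for $N<M$ the difference equals $t\bigl((\chi_{2^{-N}}-\chi_{2^{-M}})\varphi\bigr)$, the argument is a $C^m$ function supported away from $X$, and by (\ref{tcontinuitycm}) together with the $d=0$ estimate its absolute value is bounded by $2C\tilde{C}\Vert\varphi\Vert^{K\cap\{d(x,X)\leqslant 2^{-N}\}}_m$, which tends to $0$ uniformly in $M>N$. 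Equivalently, $(1-\chi_\lambda)\varphi\to\varphi$ in $C^m$ and the uniformly continuous functional $t$ sends Cauchy sequences to Cauchy sequences — which is the density argument the paper uses. Your identification of the easy direction, the reduction to the $d=0$ case of Lemma~\ref{Malgrangestyletechnicallemma}, the uniform bound $\Vert(1-\chi_\lambda)\varphi\Vert^K_m\leqslant (1+\tilde{C})\Vert\varphi\Vert^K_m$ giving continuity of $t_m$, and the uniqueness via density are all correct.
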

\begin{proof}
It suffices to prove that
the space of $C^{\infty}$ functions
whose support does not meet $X$ is dense in 
$\mathcal{I}^{m}(X,\mathbb{R}^n)$ in the $C^m$ topology.
In fact, we prove more, let $\phi_\varepsilon$ be a  
smooth mollifier, then by a classical regularization argument,  we have
$\lim_{\varepsilon\rightarrow 0}
(1-\chi_\lambda)\phi_\varepsilon*\varphi=(1-\chi_\lambda)\varphi$ in $C^m(\mathbb{R}^n)$
for all $\varphi\in C^m(\mathbb{R}^n)$
and $\lim_{\lambda\rightarrow 0} (1-\chi_\lambda)\varphi\rightarrow \varphi$ 
in $\mathcal{I}^m(X,\mathbb{R}^n)$.  
By the technical Lemma \ref{Malgrangestyletechnicallemma} (see \cite{Malgrangeideals} p.~11), we have
$$\forall\varphi \in \mathcal{I}^{m}(X,\mathbb{R}^n), \Vert\chi_\lambda\varphi\Vert^K_m\leqslant \tilde{C}\Vert \varphi\Vert_m^{K\cap \{d(x,X)\leqslant \lambda\}}\rightarrow 0$$ when $\lambda\rightarrow 0$ therefore
$\varphi=\lim_{\lambda\rightarrow 0}(1-\chi_\lambda)\varphi$
in the $C^m$ topology. Finally this proves 
$\mathcal{I}^{m}(X,\mathbb{R}^n)$ is the closure
in $C^m(\mathbb{R}^n)$ of
the space of $C^{\infty}$ functions
whose support does not meet $X$.
\end{proof} 
 Set $\beta_\lambda=1-\chi_\lambda$, from the above Theorem
we can make a notation abuse and say that
$\underset{\lambda\rightarrow 0}{\lim}t\beta_\lambda\in \mathcal{I}^{m}(X,\mathbb{R}^n)^\prime$ if $t$ satisfies
the estimate (\ref{tcontinuitycm}) 
(we just forget about the mollifier).
The idea is to compose $\underset{\lambda\rightarrow 0}{\lim}t\beta_\lambda$ with a continuous
projection $I_m:C^m(\mathbb{R}^n)\longmapsto \mathcal{I}^m(X,\mathbb{R}^n)$ so
that $\underset{\lambda\rightarrow 0}{\lim}t\beta_\lambda\circ I_m$ defines an extension
of $t$.
Dually, every compactly supported
distribution
of order $m$ induces by restriction
a linear functional on $\mathcal{I}^m(X,\mathbb{R}^n)$ , in other words
we have a surjective linear map
$p:\mathcal{E}^\prime_m(\mathbb{R}^n) \mapsto \mathcal{I}^m(X,\mathbb{R}^n)^\prime$.
We want to construct  a 
linear extension operator 
$\mathcal{R}$ called a \emph{renormalization map}
from $\mathcal{I}^{m}(X,\mathbb{R}^n)^\prime$ to $\mathcal{E}_m^\prime(\mathbb{R}^n)$
such that $p\circ \mathcal{R}:\mathcal{I}^{m}(X,\mathbb{R}^n)^\prime\mapsto \mathcal{I}^{m}(X,\mathbb{R}^n)^\prime$ is the identity map.
Then it is immediate to note that the transpose of $\mathcal{R}$
is the projection $I_m$.

Denote by $\mathcal{E}^m(X)$ the
space of differentiable functions of order
$m$ in the sense of Whitney \cite[Definition 2.3 p.~3]{Malgrangeideals},\cite[p.~146]{Bierstone}.
\begin{thm}\label{Whitneythm}
There is a bijection
between:
\begin{itemize}
\item the space of 
\emph{renormalization maps}
\item the space of decompositions
of $C^m(\mathbb{R}^n)$ in direct sum 
$C^m(\mathbb{R}^n)=\mathcal{I}^{m}(X,\mathbb{R}^n)\oplus B$
where $B$ is a closed subspace
of $C^m$ which we call
\emph{renormalization scheme}
\item the space of continuous linear  
splittings
of the exact sequence
\begin{eqnarray}\label{Whitneyexactseq}
0\longmapsto \mathcal{I}^{m}(X,\mathbb{R}^n)  \longmapsto \mathcal{C}^m(\mathbb{R}^n)
\overset{q}{\rightarrow} \mathcal{E}^m(X)\longmapsto 0. 
\end{eqnarray}
\end{itemize}
\end{thm}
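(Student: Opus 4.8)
The plan is to establish the theorem as a sequence of two bijections, exploiting that we are dealing with split exact sequences of Fr\'echet spaces and their duals. First I would make precise the middle object: the Whitney extension theorem (in the form of Malgrange \cite{Malgrangeideals} or Bierstone \cite{Bierstone}) gives a continuous surjection $q:C^m(\mathbb{R}^n)\to\mathcal{E}^m(X)$ whose kernel, by Whitney's characterization of functions vanishing to order $m$ on a closed set, is exactly $\mathcal{I}^m(X,\mathbb{R}^n)$; this is precisely the exactness of (\ref{Whitneyexactseq}). A continuous linear splitting of this sequence is, by definition, a continuous linear $\sigma:\mathcal{E}^m(X)\to C^m(\mathbb{R}^n)$ with $q\circ\sigma=\mathrm{id}$, and such splittings correspond bijectively to closed complements $B=\sigma(\mathcal{E}^m(X))$ of $\mathcal{I}^m(X,\mathbb{R}^n)$ in $C^m(\mathbb{R}^n)$: given $\sigma$ one sets $B:=\mathrm{im}\,\sigma$ and checks $C^m=\mathcal{I}^m\oplus B$; conversely a closed complement $B$ gives the decomposition $\varphi=\varphi_{\mathcal{I}}+\varphi_B$ and one defines $\sigma$ on $\mathcal{E}^m(X)$ by $\sigma(q(\varphi)):=\varphi_B$, well-defined because $q|_B$ is a continuous linear bijection onto $\mathcal{E}^m(X)$, hence a topological isomorphism by the open mapping theorem for Fr\'echet spaces. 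This settles the equivalence between the second and third bullet.

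Next I would dualize. Writing $I_m:C^m(\mathbb{R}^n)\to\mathcal{I}^m(X,\mathbb{R}^n)$ for the projection associated to a complement $B$ (so $\ker I_m=B$ and $I_m|_{\mathcal{I}^m}=\mathrm{id}$), its transpose ${}^tI_m$ maps $\mathcal{I}^m(X,\mathbb{R}^n)^\prime$ into $C^m(\mathbb{R}^n)^\prime=\mathcal{E}^\prime_m(\mathbb{R}^n)$, the space of compactly supported distributions of order $\le m$. I claim $\mathcal{R}:={}^tI_m$ is a renormalization map, i.e. $p\circ\mathcal{R}=\mathrm{id}$ on $\mathcal{I}^m(X,\mathbb{R}^n)^\prime$, where $p:\mathcal{E}^\prime_m(\mathbb{R}^n)\to\mathcal{I}^m(X,\mathbb{R}^n)^\prime$ is the restriction map, which is the transpose of the inclusion $j:\mathcal{I}^m\hookrightarrow C^m$. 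Indeed $p\circ\mathcal{R}={}^t(I_m\circ j)={}^t(\mathrm{id}_{\mathcal{I}^m})=\mathrm{id}$ since $I_m\circ j$ is the identity of $\mathcal{I}^m$. Conversely, given any renormalization map $\mathcal{R}:\mathcal{I}^m(X,\mathbb{R}^n)^\prime\to\mathcal{E}^\prime_m(\mathbb{R}^n)$ with $p\circ\mathcal{R}=\mathrm{id}$, I would recover a decomposition by transposing once more: since $\mathcal{I}^m$ and $C^m$ are reflexive Fr\'echet spaces (they are Fr\'echet–Schwartz, or one argues directly that the relevant double transpose recovers the original map on these spaces), ${}^t\mathcal{R}:C^m(\mathbb{R}^n)\to\mathcal{I}^m(X,\mathbb{R}^n)$ is continuous linear and the identity $p\circ\mathcal{R}=\mathrm{id}$ transposes to ${}^t\mathcal{R}\circ j=\mathrm{id}_{\mathcal{I}^m}$, so ${}^t\mathcal{R}$ is a continuous projection of $C^m$ onto $\mathcal{I}^m$; its kernel is then a closed complement $B$. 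One checks the two assignments $B\mapsto{}^tI_m$ and $\mathcal{R}\mapsto\ker({}^t\mathcal{R})$ are mutually inverse, using that a continuous projection onto $\mathcal{I}^m$ is determined by its kernel and that transposition is involutive on these reflexive spaces. Combining this with the first paragraph gives the triple bijection; the relation $I_m={}^t\mathcal{R}$ asserted just after the theorem is exactly what the dualization produces.

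The main obstacle, and the reason the statement is nontrivial rather than pure formalism, is the reflexivity/duality bookkeeping together with the assertion that $\mathcal{R}={}^tI_m$ genuinely lands in $\mathcal{E}^\prime_m(\mathbb{R}^n)$ with the right support and order properties: one must know that a renormalization map, a priori just a linear map into compactly supported distributions of order $m$, is automatically continuous for the weak-$*$ topologies so that transposing is legitimate, and that the complement $B$ one extracts is indeed closed (this uses the open mapping theorem as above). A secondary point requiring care is the explicit link with the constructive formula preceding the theorem: one should verify that $t_m\circ I_m$, for $t_m=\lim_{\lambda\to 0}t\beta_\lambda\in\mathcal{I}^m(X,\mathbb{R}^n)^\prime$, coincides with $\mathcal{R}(t_m)$ viewed in $\mathcal{E}^\prime_m(\mathbb{R}^n)$, which is immediate once $\mathcal{R}={}^tI_m$ is established since $\langle {}^tI_m(t_m),\varphi\rangle=\langle t_m,I_m\varphi\rangle$ for all $\varphi\in C^m(\mathbb{R}^n)$. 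Beyond these functional-analytic checks, everything else is the standard Whitney–Malgrange machinery, which we are entitled to invoke.
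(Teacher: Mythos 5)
Your proposal takes essentially the same route as the paper: exactness and existence of splittings from the Whitney extension theorem, equivalence of splittings and closed complements via the open mapping theorem, and dualization via transposition to pass to renormalization maps. The computation $p\circ{}^tI_m={}^t(I_m\circ j)={}^t(\mathrm{id}_{\mathcal{I}^m})=\mathrm{id}$ is exactly the paper's mechanism, and your observation that the transpose of a renormalization map recovers $I_m$ matches the remark the paper makes just before the theorem.

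There is, however, a genuine error in the justification of the converse direction. You write that $\mathcal{I}^m(X,\mathbb{R}^n)$ and $C^m(\mathbb{R}^n)$ ``are reflexive Fr\'echet spaces (they are Fr\'echet--Schwartz)'': this is false for finite $m$. The space $C^m(\mathbb{R}^n)$ is not a Schwartz space, not Montel, and not reflexive --- for instance $C^m(K)$ with $K$ compact is a nonreflexive Banach space, and Whitney extension makes it a complemented subspace of $C^m(\mathbb{R}^n)$, so reflexivity of the latter would force reflexivity of the former. Only $C^\infty$ is nuclear and reflexive. As a result, your double-transpose argument ${}^t\mathcal{R}:C^m\to\mathcal{I}^m$ is not legitimate as written: a priori ${}^t\mathcal{R}$ lands in the bidual $\mathcal{I}^m(X,\mathbb{R}^n)^{\prime\prime}$, and recovering a projection on $C^m$ from an arbitrary strongly continuous splitting $\mathcal{R}$ of the dual sequence requires knowing that $\mathcal{R}$ is weak-$*$ continuous (i.e.\ is a transpose), which is not automatic for maps between strong duals of nonreflexive Fr\'echet spaces. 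To be fair, the paper's own proof is terse on precisely this implication; but the correct fix is either to restrict the notion of renormalization map to weak-$*$ continuous operators (equivalently, those arising as transposes), or to supply an argument specific to this exact sequence --- not to invoke a reflexivity that $C^m$ for finite $m$ does not have. Your parenthetical fallback ``or one argues directly\ldots'' gestures at the right idea but would need to be made precise for the proof to close.
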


\begin{proof}
The exactness of (\ref{Whitneyexactseq}) and the existence of linear continuous splittings
of (\ref{Whitneyexactseq}) is a consequence of the Whitney extension theorem (see \cite[p.~10]{Malgrangeideals},
\cite[Thm 2.3 p.~146]{Bierstone}).
Since (\ref{Whitneyexactseq}) is a continuous exact sequence 
of Fr\'echet spaces,
the dual sequence:
\begin{eqnarray}\label{dualwhitneyseq}
0\longmapsto \mathcal{E}^\prime_{m,X}(\mathbb{R}^n) \longmapsto \mathcal{E}^\prime_{m}(\mathbb{R}^n)
\overset{p}{\rightarrow} \mathcal{I}^{m}(X,\mathbb{R}^n)^\prime\longmapsto 0 
\end{eqnarray}
is exact \cite[Prop 26.4 p.~308]{VogtMeise}.

$T$ is a linear splitting of (\ref{Whitneyexactseq}) 
\begin{itemize}
\item $\Leftrightarrow  T\circ q$ is a
continuous projector
on the closed subspace $B=\text{ran}(T)$
\item $\Leftrightarrow C^m(\mathbb{R}^n)=B\oplus \mathcal{I}^m(X,\mathbb{R}^n)$
where the projection $Id-T\circ q$ on $\mathcal{I}^m(X,\mathbb{R}^n)$ is denoted
by $I_m$
\item $\Leftrightarrow \mathcal{R}= ^tI_m$ splits
the dual exact sequence (\ref{dualwhitneyseq}). 
\end{itemize} 
\end{proof}
\paragraph{The Whitney extension Theorem, formal neighborhoods and extendible distributions.}
Let us
give several interpretations of 
the result of Theorem \ref{Whitneythm}.
First, the reader 
can think of 
the direct sum decomposition 
as a way to decompose a $C^m$ 
function as a sum of a 
``Taylor
remainder'' which vanishes at order $m$ on $X$ and 
a ``Taylor polynomial'' in $B$. If $X$ were a point, $\mathcal{E}^m(X)$
is isomorphic to the space $\mathbb{R}_m[X_1,...,X_n]$ of polynomials of degree $m$ in $n$ variables,
we can choose $B=\mathbb{R}_m[x_1,...,x_n]$
and the decomposition $B+\mathcal{I}^m$ is given by Taylor's 
formula.
For $\varphi\in C^m(\mathbb{R}^n)$, one can think
of $q(\varphi)\in \mathcal{E}^m(X)\simeq C^m(\mathbb{R}^n)/\mathcal{I}^m(X,\mathbb{R}^n)$
as the restriction of $\varphi$ to the \emph{infinitesimal neighborhood of $X$ of order $m$}.
More generally, let $\mathcal{I}^\infty(X,\mathbb{R}^n)$ be the closed ideal
of functions in $C^\infty(\mathbb{R}^n)$ which vanish at infinite order on $X$, this 
is a nuclear Fr\'echet space since it is a closed subspace of the nuclear
Fr\'echet space $C^\infty(\mathbb{R}^n)$. 
We can think of the space $\mathcal{E}(X)$ of $C^\infty$
functions in the sense of Whitney as some sort of 
$\infty$-jets in ``the transverse directions'' to $X$
since by the Whitney extension theorem, 
we have a continuous exact sequence of nuclear Fr\'echet 
spaces:
\begin{equation}
0\longmapsto \mathcal{I}^\infty(X,\mathbb{R}^n) \longmapsto C^\infty(\mathbb{R}^n) \longmapsto \mathcal{E}(X)\longmapsto 0
\end{equation}
which implies that $\mathcal{E}(X)$ is the quotient space $C^\infty(\mathbb{R}^n)/\mathcal{I}^\infty(X,\mathbb{R}^n)$.
When $X$ is a submanifold of $\mathbb{R}^n$, it is
interesting to think of $\mathcal{E}(X)$ as smooth functions
\emph{restricted to the formal neighborhood of $X$}.
And the \emph{formal neighborhood of} $X$ is then 
defined
as the topological 
dual of $\mathcal{E}(X)$ 
which is nothing but the
space of distributions $\mathcal{E}^\prime_X(\mathbb{R}^n)$ 
with compact support contained in $X$ and fits in 
the continuous dual exact sequence of DNF spaces \cite[appendix A]{CHM}:
\begin{equation}
0\longmapsto \mathcal{E}^\prime_X(\mathbb{R}^n) \longmapsto \mathcal{E}^\prime(\mathbb{R}^n) \longmapsto \mathcal{E}(X)/\mathcal{E}^\prime_X(\mathbb{R}^n) \longmapsto 0
\end{equation}
where the quotient 
space $\mathcal{E}(X)/\mathcal{E}^\prime_X(\mathbb{R}^n)$ should be interpreted 
as the space of distributions in $\mathcal{D}^\prime(\mathbb{R}^n\setminus X)$ 
which are \textbf{extendible} in $\mathcal{E}(X)$ and the continuous map 
$\mathcal{E}^\prime(\mathbb{R}^n) \longmapsto \mathcal{E}(X)/\mathcal{E}^\prime_X(\mathbb{R}^n)$ is 
in fact the transpose of the inclusion map $\mathbb{R}^n\setminus X\hookrightarrow \mathbb{R}^n$. 
Another nice consequence
of the theory of nuclear Fr\'echet spaces is that the space
of extendible distributions is a DNF space.
 
\paragraph{The renormalization group.}
We also define the renormalization group $G$ as
the collection of linear,
continuous, bijective maps from $C^m(\mathbb{R}^n)$ to itself
preserving $\mathcal{I}^m(X,\mathbb{R}^n)$.
Note that $g\in G\implies g^{-1}$
is continuous by the open mapping
theorem hence $G$ is well defined as 
a group. 
Let $\mathcal{R}$ be a renormalization map 
corresponding to a projection $I_m$.
For any element $g\in G$, we define the action of $g$ on $\mathcal{R}$ as follows:
$\forall t\in\mathcal{I}^m(X,\mathbb{R}^n)^\prime, g.\mathcal{R}t(\varphi)=\mathcal{R}t(g(\varphi))=
t(I_m\circ g(\varphi))$ where 
$\mathcal{R}t(g .)\in\mathcal{E}^\prime(\mathbb{R}^n)$ is an extension of $t\in\mathcal{I}^m(X,\mathbb{R}^n)^\prime$
since $g$ preserves $\mathcal{I}^m(X,\mathbb{R}^n)$. 

%\begin{prop}
%The renormalization group $G$
%acts transitively 
%on the space of renormalization maps.
%\end{prop}
%\begin{proof}
%Every renormalization is the transpose of a projection $I_m:C^m(\mathbb{R}^n)\mapsto \mathcal{I}^m(X,\mathbb{R}^n)$. Now to establish our theorem it suffices to prove the following claim: assume we are given two such projections $I_{m}^1,I^2_m$ on the ideal
%$\mathcal{I}(X,\mathbb{R}^n)$,
%then it suffices to prove that there is an element $g\in G$ such that
%$I_m^1\circ g=I_m^2$. Define $g=Id-I_m^1+I_m^2$, then
%it is obvious that $g$ is invertible and
%satisfies $I_m^1\circ g=I_m^2$.
%\end{proof}

\paragraph{Renormalization as subtraction of counterterms.}

Assume we choose a renormalization scheme.
We denote by $P_m=Id-I_m$ the projection from
$C^m$ to the closed subspace $B\subset C^m$ 
which plays the role
of the \emph{Taylor polynomials}.
From the above theorem and recall $\beta_\lambda=1-\chi_\lambda$
where $\chi_\lambda$ is the function of Lemma \ref{Malgrangestyletechnicallemma} 
\begin{prop}\label{counterterm}
If $t$ satisfies the estimate
\ref{tcontinuitycm} then: 
\begin{eqnarray}
\forall\varphi\in C^\infty(\mathbb{R}^n), 
\overline{t}(\varphi)=\lim_{\lambda\rightarrow 0} \underset{\text{finite part}}{t(\beta_\lambda I_m\varphi)}
&=& \lim_{\lambda\rightarrow 0}t(\beta_\lambda\varphi)  -\underset{\text{singular part}} {t(\beta_\lambda P_m\varphi)}
\end{eqnarray}
is a well defined extension of $t$.
\end{prop}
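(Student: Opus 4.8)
The plan is to unwind the definitions so that the claimed identity becomes an immediate consequence of Theorem \ref{Whitneythm} together with the continuity estimate (\ref{tcontinuitycm}). First I would recall that, by the Lemma preceding Proposition \ref{counterterm}, the estimate (\ref{tcontinuitycm}) is exactly what guarantees that $t$ extends by continuity to a functional $t_m\in\mathcal{I}^{m}(X,\mathbb{R}^n)^\prime$, with $t_m(\psi)=\lim_{\lambda\to 0}t(\beta_\lambda\psi)$ for every $\psi\in\mathcal{I}^m(X,\mathbb{R}^n)$ (absorbing the mollifier into the notation as agreed in the text). Fix the renormalization scheme $C^m(\mathbb{R}^n)=\mathcal{I}^m(X,\mathbb{R}^n)\oplus B$, with associated projections $I_m$ onto $\mathcal{I}^m(X,\mathbb{R}^n)$ and $P_m=\mathrm{Id}-I_m$ onto $B$, and let $\mathcal{R}={}^tI_m$ be the corresponding renormalization map supplied by Theorem \ref{Whitneythm}. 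I would then simply set $\overline{t}=\mathcal{R}t_m\in\mathcal{E}^\prime_m(\mathbb{R}^n)$ and check the two things required: that $\overline{t}$ extends $t$, and that it is given by the displayed finite-part formula.

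For the extension property: since $p\circ\mathcal{R}=\mathrm{Id}$ on $\mathcal{I}^m(X,\mathbb{R}^n)^\prime$ by Theorem \ref{Whitneythm}, the restriction of $\overline{t}$ to $\mathcal{I}^m(X,\mathbb{R}^n)$ is $t_m$, hence $\overline{t}$ agrees with $t$ on all test functions supported away from $X$ (recall $\mathcal{I}(X,\mathbb{R}^n)\subset\mathcal{I}^m(X,\mathbb{R}^n)$), so $\overline{t}|_{\mathbb{R}^n\setminus X}=t$. For the formula: for any $\varphi\in C^\infty(\mathbb{R}^n)$ we have $I_m\varphi\in\mathcal{I}^m(X,\mathbb{R}^n)$, so by definition of $\mathcal{R}={}^tI_m$,
\begin{equation}
\overline{t}(\varphi)=\mathcal{R}t_m(\varphi)=t_m(I_m\varphi)=\lim_{\lambda\to 0}t(\beta_\lambda I_m\varphi).
\end{equation}
This is the left-hand ``finite part'' expression. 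Now write $I_m\varphi=\varphi-P_m\varphi$ and use linearity of $t$ on $\mathcal{I}(X,\mathbb{R}^n)$ together with the fact that $\beta_\lambda\varphi$ and $\beta_\lambda P_m\varphi$ both have support disjoint from $X$ for each fixed $\lambda\in(0,1]$ (because $\beta_\lambda=1-\chi_\lambda$ vanishes where $d(x,X)\leqslant\lambda/8$). Thus $t(\beta_\lambda I_m\varphi)=t(\beta_\lambda\varphi)-t(\beta_\lambda P_m\varphi)$ for every $\lambda$, and passing to the limit gives the right-hand side; the individual limits $\lim_{\lambda\to 0}t(\beta_\lambda\varphi)$ and $\lim_{\lambda\to 0}t(\beta_\lambda P_m\varphi)$ need not exist separately (indeed they are the ``divergent'' pieces), which is why the statement only asserts the difference converges — and it does, being equal to $\overline{t}(\varphi)$.

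The only genuinely delicate point, and the one I would be most careful about, is the justification that $\lim_{\lambda\to 0}t(\beta_\lambda\psi)=t_m(\psi)$ for $\psi\in\mathcal{I}^m(X,\mathbb{R}^n)$, i.e. that the ``notation abuse'' of dropping the mollifier $\phi_\varepsilon$ is legitimate. This rests on the density argument already proved in the Lemma above: $(1-\chi_\lambda)\phi_\varepsilon*\psi\to(1-\chi_\lambda)\psi$ in $C^m$ as $\varepsilon\to 0$, and $(1-\chi_\lambda)\psi\to\psi$ in $\mathcal{I}^m(X,\mathbb{R}^n)$ as $\lambda\to 0$, the latter via the technical Lemma \ref{Malgrangestyletechnicallemma} estimate $\Vert\chi_\lambda\psi\Vert^K_m\leqslant\tilde C\Vert\psi\Vert_m^{K\cap\{d(x,X)\leqslant\lambda\}}\to 0$. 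Since $t_m$ is continuous on $\mathcal{I}^m(X,\mathbb{R}^n)$ for the $C^m$ topology, these convergences transfer to the values, giving the stated double-limit expression and in particular the single limit $\lim_{\lambda\to 0}t(\beta_\lambda\psi)$ when $\psi$ already lies in $\mathcal{I}^m(X,\mathbb{R}^n)$. Everything else is bookkeeping: linearity of $I_m,P_m$, the support property of $\beta_\lambda$, and the identity $p\circ\mathcal{R}=\mathrm{Id}$ from Theorem \ref{Whitneythm}.
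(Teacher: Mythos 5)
Your proof is correct and follows exactly the approach the paper leaves implicit: you set $\overline t=\mathcal R\,t_m$ with $\mathcal R={}^tI_m$ from Theorem \ref{Whitneythm}, unwind the transpose to get $\overline t(\varphi)=t_m(I_m\varphi)=\lim_{\lambda\to 0}t(\beta_\lambda I_m\varphi)$, and split via $I_m=\mathrm{Id}-P_m$. You also correctly flag and resolve the one real subtlety — the ``notation abuse'' of dropping the mollifier so that $t(\beta_\lambda\psi)$ makes sense for $\psi\in\mathcal I^m(X,\mathbb R^n)$, justified by the density argument and the estimate of Lemma \ref{Malgrangestyletechnicallemma} — and you correctly read the right-hand side as the limit of the difference rather than a difference of (possibly divergent) limits.
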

We call such extension 
a \textbf{renormalization}.
The divergences of 
$t(\beta_\lambda \varphi )$ come from the fact that
$\varphi\notin \mathcal{I}^m(X,\mathbb{R}^n)$, however these divergences
are local in the sense they can be subtracted by 
the counterterm $t(\beta_\lambda P_m\varphi)$ which becomes singular when
$\lambda\rightarrow 0$ and only depends on \textbf{the restriction to $X$
of the $m$-jets of $\varphi$}
(since $\varphi$ vanishes near $X$ implies that $\varphi\in\mathcal{I}^m\implies 
P_m\varphi=0)$.
By construction, the renormalization group $G$ acts
\textbf{on the space of all renormalizations} of $t$.

\subsection{Going back to the
manifold case.}
\paragraph{Difference between two extensions.}
Following the notations of \ref{Partitionofunityargument}, 
recall that $(U_i)_i$ was our locally
finite open cover of $M$ by relatively compact sets.
On each open set $U_i$, we defined
a chart $\psi_i:U_i\mapsto V\subset\mathbb{R}^n$
and we considered a partition of unity 
$(\varphi_i)_i$ subordinated to $(U_i)_i$.
Let $t\in\mathcal{D}^\prime(M\setminus X)$ be a distribution
with moderate growth, then by Theorem \ref{euclideanmainthm} we may assume
that:
\begin{equation}
\forall U_i, \exists m_i\in\mathbb{N} , \exists C_i>0, 
\forall\varphi\in C^\infty(\mathbb{R}^n\setminus X\cap \text{supp }\varphi_i),
\vert \psi_{i*}(t\varphi_i)(\varphi)\vert\leqslant C_i\Vert \varphi \Vert_{m_i}.
\end{equation}
By Theorem \ref{euclideanmainthm}, we may find
an extension  $\overline{t}=\sum_i \overline{t\varphi_i} \in \mathcal{D}^\prime(M)$
in such a way that for every $i$,
$\overline{t\varphi_i}|_{U_i}$
has order $m_i$.
If we prescribe the \textbf{order of the extensions} on every $U_i$
to be equal to $m_i\in\mathbb{N}$, 
then 
two extensions $t_1,t_2$ will differ
on each $U_i$ by a distribution $t_1-t_2|_{U_i}$ of order $m_i$
supported on $X\cap U_i$.
\paragraph{How to renormalize in the
manifold case ?}

On each chart $\psi_i:U_i\mapsto V\subset \mathbb{R}^n$, 
we can extend  
$\psi_{i*}(t\varphi_i)\in\mathcal{D}^\prime(V\setminus \psi_i(X\cap \text{ supp }\varphi_i))$
by renormalization. In other words, by Proposition
\ref{counterterm},
there is a family
of functions $\beta_{\lambda}(i)\in C^\infty(\mathbb{R}^n),\beta_\lambda(i)\rightarrow 1$ and counterterms
$c_\lambda(i)\in\mathcal{E}_{\psi_i(X\cap \text{ supp }\varphi_i)}^\prime(\mathbb{R}^n)$
such that $\lim_{\lambda\rightarrow 0} \psi_{i*}(t\varphi_i)\beta_\lambda(i)-c_\lambda(i)$ is an extension
of $\psi_{i*}(t\varphi_i)$ in $\mathcal{E}^\prime(\mathbb{R}^n)$.
Then setting 
\begin{eqnarray}\label{renormonchart}
\beta_\lambda=\sum_{i}\varphi_i\psi_i^*\beta_\lambda(i)
\text{ and }c_\lambda=\sum_{i}\psi_i^*c_\lambda(i),
\end{eqnarray}
we find that:
\begin{eqnarray}\label{renormchart2}
t\beta_\lambda-c_\lambda=\sum_{i}t\varphi_i\psi_i^*\beta_\lambda(i)-\psi_i^*c_\lambda(i) 
\end{eqnarray}
converges to some extension 
of $t$ when $\lambda\rightarrow 0$. This proves 
$(1)\Leftrightarrow (3)$ in Theorem (\ref{mainthm}).

\section{Moderate growth and scaling.}
In this section, we compare two approaches that
were developped to measure the singular behaviour of a distribution
along a closed subset $X$: the moderate growth condition
and the one used in \cite{Dangthese,Meyer-98,Brunetti2} in terms
of scaling. We show that both 
approach are equivalent when $X$ is a submanifold of $M$.
\subsection{Weakly homogeneous distributions have moderate growth.}
In this subsection,  
we work on $\mathbb{R}^n$ viewed as a product 
$\mathbb{R}^{n_1}\times \mathbb{R}^{n_2}, n=n_1+n_2$
and we adopt
the following splitting of variables 
$x\in \mathbb{R}^n=(x_1,x_2)\in \mathbb{R}^{n_1}\times \mathbb{R}^{n_2}$.
Here we establish the relationship between
our definition of moderate growth and the one
used by Yves Meyer \cite{Meyer-98} and 
the author \cite{Dangthese} in terms of scaling.
First we scale in the transverse directions
to a vector subspace $X= \mathbb{R}^{n_1}\times \{x_2=0\}$
of 
$\mathbb{R}^n$ with the maps
$\Phi^\lambda:(x_1,x_2)\longmapsto (x_1,\lambda x_2)$.
By definition, the scalings acts on $\mathcal{D}^\prime(\mathbb{R}^n)$
by duality $\left(\Phi^{\lambda*} t \right)(\varphi)=\lambda^{-n_2}t(\Phi^{\lambda^{-1}*}\varphi)$.
A distribution $t\in\mathcal{D}^\prime(\mathbb{R}^n\setminus X)$ 
is said to be weakly homogeneous in $\mathcal{D}^\prime(\mathbb{R}^n\setminus X)$
%(resp $\mathcal{S}^\prime(\mathbb{R}^n\setminus X)$)
of degree $s$ if the family of distributions
$\lambda^{-s}\Phi^{\lambda*}t,\lambda\in(0,+\infty]$ is bounded in $\mathcal{D}^\prime(\mathbb{R}^n\setminus X)$.
%(resp $\mathcal{S}^\prime(\mathbb{R}^n\setminus X)$).
\begin{thm}\label{scalingmodgrowth}
If $t$ is weakly homogeneous of degree $s$ in $\mathcal{D}^\prime(\mathbb{R}^n\setminus X)$
%(resp $\mathcal{S}^\prime(\mathbb{R}^n\setminus X)$)
then $t$ has moderate growth along $X= \mathbb{R}^{n_1}\times \{x_2=0\}$.
More precisely, for all compact subset $K\subset\mathbb{R}^n$ there is $(m,C)\in\mathbb{N}\times\mathbb{R}$
and a compact subset $B\subset\mathbb{R}^n$ containing $K$
s.t.
\begin{eqnarray}
\forall\varphi\in \mathcal{D}_K(\mathbb{R}^n\setminus X), \vert t(\varphi)\vert\leqslant (1+d(\text{supp }\varphi,X)^{s+n_2})\Vert\varphi\Vert_m^{B}.
\end{eqnarray}
\end{thm}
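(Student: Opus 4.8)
The plan is to turn the global boundedness of the rescaled family $\lambda^{-s}\Phi^{\lambda*}t$ into the pointwise moderate growth estimate by a rescaling argument: given a test function $\varphi$ supported away from $X$, at distance roughly $\delta=d(\mathrm{supp}\,\varphi,X)$, I would rescale it by $\Phi^{\delta^{-1}}$ (or $\Phi^{\delta}$, depending on conventions) so that its support lands in a region at distance of order $1$ from $X$, apply the uniform bound, and then keep track of how the seminorms and the Jacobian factor $\lambda^{-n_2}$ transform under the rescaling. Concretely, first I would fix a compact $K$ and choose an auxiliary cutoff $\theta\in\mathcal{D}(\mathbb{R}^n\setminus X)$ equal to $1$ on the annular region $\{1/2\le |x_2|\le 2\}\cap B$ for a suitable enlarged compact $B\supset K$; then for $\delta\in(0,1]$ the function $\varphi$ with $\mathrm{supp}\,\varphi\subset K$ and $d(\mathrm{supp}\,\varphi,X)\ge \delta$ satisfies $\Phi^{\delta^{-1}*}\varphi=\theta\cdot\Phi^{\delta^{-1}*}\varphi$ (the dilated support has moved away from $X$ by a factor $\delta^{-1}$), so that $t(\varphi)$ can be rewritten in terms of $(\delta^{-s}\Phi^{\delta*}t)$ paired against $\theta\cdot\Phi^{\delta^{-1}*}\varphi$.

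The key computation is the bookkeeping. By definition of the scaling action on distributions, $t(\varphi) = \delta^{n_2}\delta^{s}\,\bigl(\delta^{-s}\Phi^{\delta*}t\bigr)\!\bigl(\Phi^{\delta^{-1}*}\varphi\bigr)$ up to the precise normalization, and since $\theta\cdot\Phi^{\delta^{-1}*}\varphi$ ranges in a bounded subset of $\mathcal{D}_{B'}(\mathbb{R}^n\setminus X)$ for a fixed compact $B'$, the uniform boundedness hypothesis (which on the bounded set $\mathcal{D}_{B'}$ gives, via Banach--Steinhaus, a seminorm estimate $|(\delta^{-s}\Phi^{\delta*}t)(\psi)|\le C\Vert\psi\Vert_m^{B'}$ uniformly in $\delta$) applies. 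It then remains to estimate $\Vert \theta\cdot\Phi^{\delta^{-1}*}\varphi\Vert_m^{B'}$: each transverse derivative $\partial_{x_2}$ acting on $\Phi^{\delta^{-1}*}\varphi(x_1,x_2)=\varphi(x_1,\delta^{-1}x_2)$ produces a factor $\delta^{-1}$, so at worst $\Vert \Phi^{\delta^{-1}*}\varphi\Vert_m^{B'}\le \delta^{-m}\Vert\varphi\Vert_m^{B}$, and combining with the prefactor $\delta^{n_2+s}$ one gets $|t(\varphi)|\le C\,\delta^{n_2+s-m}\Vert\varphi\Vert_m^B$; choosing $m$ appropriately and absorbing into the stated form $(1+d(\mathrm{supp}\,\varphi,X)^{s+n_2})$ gives the conclusion after noting $\delta\le 1$ and handling the easy case $\delta\ge 1$ (where $\varphi$ is already supported away from $X$ at unit distance, so any fixed seminorm bound works).

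The main obstacle I expect is making the geometry of the rescaling clean, in two respects. First, the support condition: $d(\mathrm{supp}\,\varphi,X)\ge\delta$ controls only the transverse coordinate $|x_2|$, so one must be careful that rescaling $x_2\mapsto \delta^{-1}x_2$ does not blow the support out of any fixed compact set in the $x_1$-directions — this is why $B$ must be chosen larger than $K$ but the argument only needs $x_1$ to stay in a fixed compact, which is automatic since the $x_1$-coordinate is untouched, while $x_2$ is pushed to $|x_2|\ge 1$ but stays bounded because $\varphi$ is compactly supported and $\delta\le 1$ forces the dilation factor $\delta^{-1}\ge 1$ to act on a set whose $x_2$-extent was at most $\mathrm{diam}(K)$; one should double-check the dilated $x_2$-support lies in a ball of radius $\mathrm{diam}(K)$, which indeed it does not — it lies in a ball of radius $\delta^{-1}\mathrm{diam}(K)$, which is unbounded, so in fact one should dilate the other way, $x_2\mapsto\delta x_2$, sending $|x_2|\ge\delta\cdot(\text{something})$... this sign/direction issue is exactly the delicate point, and the correct choice is to dilate so that the support moves to distance $\sim 1$ while staying in a fixed compact, which forces using $\Phi^{\delta}$ on $\varphi$ when $d\ge\delta$ with $d$ bounded, giving the prefactor $\delta^{-n_2}$ from the Jacobian and $\delta^{-m}$ from derivatives, hence the exponent $s+n_2$ after optimizing. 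Second, extracting a single seminorm $\Vert\cdot\Vert_m^{B'}$ valid uniformly in $\lambda$ from mere boundedness of the family requires the Banach--Steinhaus / uniform boundedness principle applied to the set $\{\lambda^{-s}\Phi^{\lambda*}t\}$ tested against the bounded set of rescaled cutoff test functions — this is routine but must be invoked explicitly since the hypothesis is phrased as boundedness rather than equicontinuity.
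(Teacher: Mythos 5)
Your strategy is aimed in the right direction (convert the boundedness of the rescaled family into a uniform seminorm estimate via Banach--Steinhaus, then rescale a given $\varphi$ to exploit it), and you correctly flag the exact place where the argument breaks. But the difficulty you flag is not a sign convention to be fixed; it is a genuine gap, and its resolution is the key idea of the paper's proof that is absent from yours. The point is that no single dilation $\Phi^{\delta^{\pm 1}}$ can send $\mathrm{supp}\,\varphi$ into a fixed compact region at distance $\sim 1$ from $X$. A generic $\varphi\in\mathcal{D}_K(\mathbb{R}^n\setminus X)$ with $d(\mathrm{supp}\,\varphi,X)=\delta$ may have $x_2$-support filling the whole annulus $\delta\leq |x_2|\leq L$, whose ratio of outer to inner radius is $L/\delta\to\infty$. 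Scaling by $\Phi^{\delta}$ (the outward direction, $\varphi(x_1,\delta x_2)$) pushes the inner boundary to $|x_2|\geq 1$ but the outer boundary to $|x_2|\leq L/\delta$, which escapes every fixed compact. The uniform boundedness hypothesis only gives equicontinuity on $\mathcal{D}_{\tilde K}(\mathbb{R}^n\setminus X)$ for each \emph{fixed} compact $\tilde K$, so it cannot be applied to the rescaled test functions. There is no single choice of dilation that "moves the support to distance $\sim 1$ while staying in a fixed compact", contrary to what you assert at the end of your third paragraph.

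The missing ingredient is a Littlewood--Paley type decomposition of $\varphi$ \emph{before} rescaling. The paper introduces a continuous partition of unity
\begin{equation*}
\int_0^\infty \frac{d\lambda}{\lambda}\,\psi(\lambda^{-1}x_2)=1 \qquad (x_2\neq 0),\quad \psi=-x\frac{d\chi}{dx},
\end{equation*}
with $\psi(\lambda^{-1}\cdot)$ supported on the dyadic annulus $\lambda/2\leq|x_2|\leq 2\lambda$, and writes
\begin{equation*}
\varphi=\int_{d(\mathrm{supp}\,\varphi,X)/2}^{2L}\frac{d\lambda}{\lambda}\bigl(\Phi^{\lambda^{-1}*}\psi\bigr)\varphi.
\end{equation*}
Each piece $(\Phi^{\lambda^{-1}*}\psi)\varphi$ is supported in an annulus of \emph{bounded aspect ratio}, so after applying $\Phi^{\lambda*}$ it lands in a fixed compact $\tilde K=\{|x_1|\leq L,\ 1/2\leq|x_2|\leq 2\}$, and the family $\psi\,\Phi^{\lambda*}\varphi$ is bounded in $\mathcal{D}_{\tilde K}(\mathbb{R}^n\setminus X)$. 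Only then can Banach--Steinhaus be invoked uniformly in $\lambda$. Pairing term by term with $\lambda^{-s}\Phi^{\lambda*}t$ produces the factor $\lambda^{s+n_2}$, and integrating $\int_{d/2}^{2L}\lambda^{s+n_2}\frac{d\lambda}{\lambda}$ over the scales yields exactly the $(1+d(\mathrm{supp}\,\varphi,X)^{s+n_2})$ on the right-hand side. Your proposal never decomposes $\varphi$ across scales, so the rescaling step cannot be made to work as written; inserting the dyadic decomposition would convert your sketch into the paper's proof.
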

It follows by Theorem \ref{mainthm} that
such $t$ has an extension in $\mathcal{D}^\prime(\mathbb{R}^n)$.
Note that when $s+n_2>0$, we are in a trivial situation of moderate growth
since the r.h.s. does not diverge.
\begin{proof}
The proof relies on the existence of 
a continuous partition of unity,
$$\int_0^\infty \frac{d\lambda}{\lambda} \psi(\lambda^{-1}x_2)=\int_0^\infty \frac{d\lambda}{\lambda} \Phi^{\lambda^{-1}*}\psi=1$$ where $\psi(\lambda^{-1}x_2)$
is supported on the corona $\frac{\lambda}{2}\leqslant \vert x_2 \vert \leqslant 2\lambda$.
Indeed, let $\chi\in C^\infty(\mathbb{R}^{n_2})$ 
be a function s.t. $\chi=1$ (resp $\chi=0$) 
when $\vert x\vert\leqslant\frac{1}{2}$ (resp $\vert x\vert\geqslant 2$) 
then set $\psi=-x\frac{d\chi}{dx}$.\\
Fix
a compact set
$B=\{\sup_{i=1,2}\vert x_i\vert\leqslant L\}$,
then for all test function
$\varphi\in\mathcal{D}_B(\mathbb{R}^n\setminus X)$
we obviously have
$$\varphi=\int_\varepsilon^{2L} \frac{d\lambda}{\lambda}\left(\Phi^{\lambda^{-1}*}\psi\right)\varphi\text{ for }\varepsilon \leqslant \frac{d(\text{supp }\varphi,X)}{2}, $$ since
$\lambda\notin [\frac{d(\text{supp }\varphi,X)}{2},2L] \implies \text{supp }\left(\Phi^{\lambda^{-1}*}\psi\right)\cap\text{supp }\left(\varphi\right)=\emptyset$.
Now it is obvious 
that
\begin{eqnarray*}
t(\varphi)&=& \int_{\frac{d(\text{supp }\varphi,X)}{2}}^{2L} \frac{d\lambda}{\lambda}t\left(\left(\Phi^{\lambda^{-1}*}\psi\right)\varphi \right)\\
&=&\int_{\frac{d(\text{supp }\varphi,X)}{2}}^{2L} \frac{d\lambda}{\lambda}\lambda^{s+n_2} \left(\lambda^{-s}\Phi^{\lambda*}t\right)\left(\psi\Phi^{\lambda*}\varphi\right)\\
\implies \vert t(\varphi)\vert&\leqslant & ((2L)^{s+n_2}+\left(\frac{d(\text{supp }\varphi,X)}{2}\right)^{s+n_2})\sup_{\lambda\leqslant 2L} \vert \left(\lambda^{-s}\Phi^{\lambda*}t\right)\left(\psi\Phi^{\lambda*}\varphi\right)  \vert
\end{eqnarray*}
A simple calculation
proves that $\left(\psi\Phi^{\lambda*}\varphi\right)_{\lambda\leqslant 2L}\subset \mathcal{D}_{\tilde{K}}(\mathbb{R}^n\setminus X)$
for $\tilde{K}=\{(x_1,x_2) | \vert x_1\vert\leqslant L, \frac{1}{2}\leqslant\vert x_2\vert\leqslant 2 \}$ and that:
\begin{eqnarray*}
\forall m\in\mathbb{N},\exists C_m>0,\forall\lambda, 
\Vert \psi\Phi^{\lambda*}\varphi\Vert_m\leqslant C_m\Vert\varphi\Vert_m
\end{eqnarray*}
therefore the family $\left(\psi\Phi^{\lambda*}\varphi\right)_\lambda$ is bounded
in the Fr\'echet space $\mathcal{D}_{\tilde{K}}(\mathbb{R}^n\setminus X)$.

The family $\left(\lambda^{-s}\Phi^{\lambda*}t\right)$
is weakly bounded in
$(\mathcal{D}_{\tilde{K}}(\mathbb{R}^n\setminus X))^\prime$
thus strongly bounded by the
uniform boundedness principle since $\mathcal{D}_{\tilde{K}}(\mathbb{R}^n\setminus X)$ is Fr\'echet (\cite[Thm 2.5 p.~44]{Rudin}):
\begin{eqnarray}
\exists C^\prime>0,m\in\mathbb{N}, \forall\lambda ,\forall\varphi\in \mathcal{D}_{\tilde{K}}(\mathbb{R}^n\setminus X),
\vert \left(\lambda^{-s}\Phi^{\lambda*}t\right)(\varphi)\vert\leqslant C^\prime\Vert \varphi\Vert_m. 
\end{eqnarray}
Therefore
\begin{eqnarray*}
\sup_{\lambda\leqslant 2L} \vert \left(\lambda^{-s}\Phi^{\lambda*}t\right)\left(\psi\Phi^{\lambda*}\varphi\right)  \vert
&\leqslant & C^\prime\Vert \psi\Phi^{\lambda*}\varphi \Vert_m\\
&\leqslant & C^\prime C_m \Vert\varphi\Vert_m\\
\implies \vert t(\varphi)\vert &\leqslant & C(1+d(\text{supp }\varphi,X)^{s+n_2}) \Vert\varphi\Vert_m
\end{eqnarray*}
for some $C>0$ independent of $\varphi\in\mathcal{D}_B(\mathbb{R}^n\setminus X)$.
\end{proof}

\section{Renormalized products.}
Let $X\subset \mathbb{R}^n$
be some closed subset.
In this section, we first define
the class $\mathcal{M}(X,\mathbb{R}^n)$ of
tempered functions along $X$:
\begin{defi}\label{tempdefi}
$f$ is tempered along $X$ if
\begin{eqnarray}
\forall m\in\mathbb{N},\forall K\subset \mathbb{R}^n \text{ compact, } 
\exists (C_m,s)\in\mathbb{R}_{\geqslant 0}^{2},
\sup_{\vert\alpha\vert\leqslant m}\vert \partial^\alpha f(x)\vert\leqslant C(1+d(x,X)^{-s}).
\end{eqnarray}
\end{defi}
Tempered functions
form an \textbf{algebra} by Leibniz rule. It is immediate
that the definition \ref{tempdefi}
can be generalized to some closed subset $X$
in a manifold $M$:
we follow the notations of  
the partition of unity argument in 
\ref{Partitionofunityargument}, 
$f$ is tempered along $X$ i.e. 
$f\in\mathcal{M}(X,M)$ if in any
local chart $\psi_i:U_i\subset M \mapsto V\subset\mathbb{R}^n$,
$\psi_{i*}\left(\varphi_if\right)\in\mathcal{M}(\psi_i(X),\mathbb{R}^n)$.

 Then 
we establish a theorem
about renormalized products:
\begin{thm}\label{renormproduct}
Let $M$ be a manifold and $X\subset M$ a closed subset. 
For all $f\in\mathcal{M}(X,M)$ and all $t\in\mathcal{D}^\prime(M)$, 
there exists a distribution
$\mathcal{R}(ft)\in\mathcal{D}^\prime(M)$  
which coincides
with the regular product $ft$ outside
$X$.
\end{thm}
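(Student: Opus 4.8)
The plan is to reduce the statement to the Euclidean, compactly-supported situation exactly as in the partition of unity argument of Section \ref{Partitionofunityargument}, and then apply Theorem \ref{mainthm}. Concretely, I would first observe that since $\mathcal{R}(ft)$ is only required to coincide with $ft$ away from $X$, and since ``having moderate growth along $X$'' and ``being tempered along $X$'' are both local and chart-independent notions (the latter by the very definition following Definition \ref{tempdefi}), it suffices to treat the case $M=\mathbb{R}^n$, with $X\subset K$ compact, $t\in\mathcal{D}'_K(\mathbb{R}^n)$ and $f\in\mathcal{M}(X,\mathbb{R}^n)$: one writes $t=\sum_i t\varphi_i$ and $ft=\sum_i \psi_i^*\big(\psi_{i*}(\varphi_i f)\cdot\psi_{i*}(\varphi_i t)\big)$ using the locally finite cover $(U_i)_i$, charts $\psi_i$ and partition of unity $(\varphi_i)_i$, so that a locally finite sum of local renormalizations gives a global one. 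This is routine and I would only sketch it.

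The heart of the argument is then: on $\mathbb{R}^n\setminus X$ the product $ft$ is a perfectly good distribution, and I claim it has moderate growth along $X$ in the sense of Definition \ref{moderategrowthestimate}. Indeed, fix a relatively compact open $U$ and $\varphi\in\mathcal{D}(U\setminus X)$. Since $t\in\mathcal{D}'(\mathbb{R}^n)$ it has finite order $m$ on $\overline{U}$, so $\vert (ft)(\varphi)\vert=\vert t(f\varphi)\vert\leqslant C\Vert f\varphi\Vert_m^{\overline U}$. Now $f$ is tempered: on the compact set $\overline U$ one has $\sup_{\vert\alpha\vert\leqslant m}\vert\partial^\alpha f(x)\vert\leqslant C'(1+d(x,X)^{-s})$. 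On $\mathrm{supp}\,\varphi$ we have $d(x,X)\geqslant d(\mathrm{supp}\,\varphi,X)$, hence $1+d(x,X)^{-s}\leqslant 1+d(\mathrm{supp}\,\varphi,X)^{-s}$ there; combining with the Leibniz rule $\partial^\alpha(f\varphi)=\sum_{k\leqslant\alpha}\binom{\alpha}{k}\partial^k f\,\partial^{\alpha-k}\varphi$ gives
\begin{equation}
\Vert f\varphi\Vert_m^{\overline U}\leqslant C''\big(1+d(\mathrm{supp}\,\varphi,X)^{-s}\big)\Vert\varphi\Vert_m^{\overline U},
\end{equation}
which is exactly estimate (\ref{moderategrowthestimate}). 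So $ft|_{\mathbb{R}^n\setminus X}\in\mathcal{T}_{\mathbb{R}^n\setminus X}$.

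Having established moderate growth, Theorem \ref{mainthm} (equivalence $(1)\Leftrightarrow(2)$) immediately produces an extension $\overline{ft}\in\mathcal{D}'(\mathbb{R}^n)$ of $ft|_{\mathbb{R}^n\setminus X}$; setting $\mathcal{R}(ft):=\overline{ft}$ finishes the local case, and reassembling via the locally finite sum $\sum_i\psi_i^*(\cdots)$ as above yields $\mathcal{R}(ft)\in\mathcal{D}'(M)$ coinciding with $ft$ on $M\setminus X$. I expect no serious obstacle here: the only point requiring a little care is making sure the temperedness constant $s$ and the order $m$ can be chosen uniformly on a given compact, which is built into Definition \ref{tempdefi} and into the fact that a distribution has locally finite order. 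If one wants the constructive/counterterm form of the extension rather than a Hahn--Banach one, one simply invokes Proposition \ref{counterterm} in place of the pure existence statement, at the cost of working with $\mathcal{I}^m(X,\mathbb{R}^n)$ and a renormalization scheme as in Theorem \ref{Whitneythm}; the moderate-growth estimate above is precisely what is needed to apply it.
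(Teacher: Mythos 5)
Your proof is correct and follows essentially the same route as the paper: reduce to the Euclidean compactly-supported case by partition of unity, show $ft$ has moderate growth along $X$ via the Leibniz rule and the temperedness estimate on $f$, and then invoke the extendibility criterion of Theorems \ref{mainthm}/\ref{euclideanmainthm}. The only difference is that the paper records the moderate-growth estimate for $ft$ in the slightly more general setting where $t$ itself is merely a distribution with moderate growth of index $s_1$ (not necessarily in $\mathcal{D}'(M)$), which is what it needs afterwards to deduce Corollary \ref{tempfunmodules}; your argument is the special case $s_1=0$ and is exactly what the theorem requires.
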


Thanks to the partition of unity argument
of \ref{Partitionofunityargument}, we may reduce
to the case where $X$ is some closed subset of $M=\mathbb{R}^n$
hence $f\in\mathcal{M}(X,\mathbb{R}^n)$
and $t\in\mathcal{E}^\prime(\mathbb{R}^n)$.
By Theorem \ref{euclideanmainthm},
distributions with moderate growth 
are extendible, therefore it suffices
to prove that $ft$ has moderate growth along $X$
which is the content of the following proposition:
\begin{prop}
Let $t\in\mathcal{D}^\prime_K(\mathbb{R}^n\setminus X)$ and 
$f\in C^\infty(\mathbb{R}^n\setminus X)$ 
such that $(t,f)$ satisfy the estimates:
\begin{eqnarray}
\exists (C,s_1)\in\mathbb{R}_{\geqslant 0}^{2},\forall\varphi \in\mathcal{I}(X,\mathbb{R}^n), 
\vert t(\varphi)\vert\leqslant C(1+d(\text{supp }\varphi,X)^{-s_1})\Vert\varphi\Vert_m^K\\
\exists (C_m,s_2)\in\mathbb{R}_{\geqslant 0}^{2},
\forall x\in K\setminus X, \sup_{\vert\alpha\vert\leqslant m}\vert \partial^\alpha f(x)\vert\leqslant C_m(1+d(x,X)^{-s_2}).
\end{eqnarray}
Then $ft$ satisfies
the estimate:
\begin{eqnarray}
\exists C^\prime, \forall\varphi \in\mathcal{I}(X,\mathbb{R}^n), 
\vert ft(\varphi)\vert\leqslant C^\prime(1+d(\text{supp }\varphi,X)^{-(s_1+s_2)})\Vert\varphi\Vert_m^K.
\end{eqnarray}
\end{prop}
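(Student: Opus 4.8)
The plan is to estimate $ft(\varphi)$ directly by writing $ft(\varphi) = t(f\varphi)$ and then applying the moderate growth estimate for $t$ to the test function $f\varphi \in \mathcal{I}(X,\mathbb{R}^n)$. The key point is that $f\varphi$ indeed lies in $\mathcal{I}(X,\mathbb{R}^n)$: since $\varphi$ has support disjoint from $X$, the product $f\varphi$ is smooth on all of $\mathbb{R}^n$ (the singularity of $f$ along $X$ is killed by $\varphi$), vanishes near $X$, and is supported in $K$. Moreover $\mathrm{supp}(f\varphi)\subset \mathrm{supp}\,\varphi$, so $d(\mathrm{supp}(f\varphi),X)\geqslant d(\mathrm{supp}\,\varphi,X)$, which lets me reuse the same distance factor.

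The main step is a Leibniz-rule bound on $\Vert f\varphi\Vert_m^K$. By the Leibniz rule,
\begin{equation}
\partial^\alpha(f\varphi)(x) = \sum_{k\leqslant\alpha}\binom{\alpha}{k}\partial^k f(x)\,\partial^{\alpha-k}\varphi(x),
\end{equation}
and for $x\in K\cap\mathrm{supp}\,\varphi$ we have, using the tempered estimate on $f$, that $\vert\partial^k f(x)\vert \leqslant C_m(1+d(x,X)^{-s_2}) \leqslant C_m(1+d(\mathrm{supp}\,\varphi,X)^{-s_2})$ since $d(x,X)\geqslant d(\mathrm{supp}\,\varphi,X)$ for such $x$. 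Combining with $\vert\partial^{\alpha-k}\varphi(x)\vert\leqslant \Vert\varphi\Vert_m^K$ and summing over the finitely many $k\leqslant\alpha$ with $\vert\alpha\vert\leqslant m$ yields a constant $C_m''$ (depending only on $n$, $m$, $C_m$) with
\begin{equation}
\Vert f\varphi\Vert_m^K \leqslant C_m''\,(1+d(\mathrm{supp}\,\varphi,X)^{-s_2})\,\Vert\varphi\Vert_m^K.
\end{equation}

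Finally I substitute $f\varphi$ into the moderate growth estimate for $t$:
\begin{equation}
\vert ft(\varphi)\vert = \vert t(f\varphi)\vert \leqslant C\,(1+d(\mathrm{supp}(f\varphi),X)^{-s_1})\,\Vert f\varphi\Vert_m^K \leqslant C C_m''\,(1+d(\mathrm{supp}\,\varphi,X)^{-s_1})(1+d(\mathrm{supp}\,\varphi,X)^{-s_2})\,\Vert\varphi\Vert_m^K,
\end{equation}
and then I bound the product $(1+r^{-s_1})(1+r^{-s_2})\leqslant 4(1+r^{-(s_1+s_2)})$ for $r=d(\mathrm{supp}\,\varphi,X)>0$ (elementary case analysis on whether $r\leqslant 1$ or $r\geqslant 1$, together with $r^{-s_i}\leqslant r^{-(s_1+s_2)}$ when $r\leqslant 1$). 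Setting $C' = 4CC_m''$ gives the claimed estimate. There is no real obstacle here; the only mild subtlety is the bookkeeping with the distance factors — one must be careful that the tempered bound on $f$, which is pointwise in $x$, can be uniformly replaced by the worst-case value $d(\mathrm{supp}\,\varphi,X)^{-s_2}$ over the support, and that the exponents genuinely add rather than, say, take a maximum.
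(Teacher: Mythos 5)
Your proposal is correct and takes essentially the same route as the paper's proof: write $ft(\varphi)=t(f\varphi)$, apply the moderate growth estimate for $t$ to $f\varphi$, bound $\Vert f\varphi\Vert_m^K$ via the Leibniz rule and the tempered estimate on $f$, and then absorb the product $(1+r^{-s_1})(1+r^{-s_2})$ into $4(1+r^{-(s_1+s_2)})$. The only cosmetic difference is that you make explicit a few bookkeeping points (why $f\varphi\in\mathcal{I}(X,\mathbb{R}^n)$, the monotonicity of the distance factor, the case analysis for the product bound) which the paper leaves implicit.
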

\begin{proof}
The claim follows from the estimate:
\begin{eqnarray*}
\forall\varphi \in\mathcal{I}(X,\mathbb{R}^n), \vert ft(\varphi)\vert &\leqslant &C(1+d(\text{supp }\varphi,X)^{-s_1})\Vert f\varphi\Vert_m^K\\
&\leqslant & CC_m2^{mn}(1+d(\text{supp }\varphi,X)^{-s_1})(1+d(\text{supp }\varphi,X)^{-s_2})\Vert\varphi\Vert_m^K\text{ by Leibniz rule }\\
&\leqslant & \underset{C^\prime}{\underbrace{4CC_m2^{mn}}}(1+d(\text{supp }\varphi,X)^{-(s_1+s_2)})\Vert\varphi\Vert_m^K.
\end{eqnarray*}
\end{proof}

\begin{ex}
Our result shares some similarities
with \cite[Theorem 4.3 p.~85]{Meyer-98} where
Meyer renormalizes
the product
of distributions 
$S_\gamma t$ at a point $x_0\in\mathbb{R}^n$, 
$S_\gamma(x)=fp\vert x-x_0\vert^\gamma$ (Hadamard's finite
part), $t$ is a distribution
which is weakly homogeneous of degree $s$ at $x_0$ and
$s+\gamma\notin-\mathbb{N}$. He shows that
the renormalized product $S_\gamma t$ is weakly homogeneous of degree $s+\gamma$
at $x_0$.
\end{ex}
Let us recall that by Theorem \ref{euclideanmainthm}, 
the space $\mathcal{T}_{\mathbb{R}^n\setminus X}(\mathbb{R}^n)$
of distributions with moderate growth along $X$ corresponds with 
the quotient
space
$\mathcal{D}^\prime(\mathbb{R}^n)/\mathcal{D}_X^\prime(\mathbb{R}^n)$
of distributions on $\mathbb{R}^n\setminus X$ extendible
on $\mathbb{R}^n$. Therefore Theorem \ref{renormproduct} implies that:
\begin{coro}\label{tempfunmodules}
$\mathcal{T}_{M\setminus X}(M)$
is a left $\mathcal{M}(X,M)$ module.
\end{coro}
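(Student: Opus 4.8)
The plan is to reduce the statement to the already-established Theorem \ref{renormproduct} together with the identification, recalled just above the corollary, of $\mathcal{T}_{M\setminus X}(M)$ with the quotient $\mathcal{D}^\prime(M)/\mathcal{D}^\prime_X(M)$. First I would fix $f\in\mathcal{M}(X,M)$ and define the action on a class $[t]\in\mathcal{D}^\prime(M)/\mathcal{D}^\prime_X(M)$ by picking a representative $t\in\mathcal{D}^\prime(M)$, forming the renormalized product $\mathcal{R}(ft)\in\mathcal{D}^\prime(M)$ supplied by Theorem \ref{renormproduct}, and setting $f\cdot[t]:=[\mathcal{R}(ft)]$. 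The point of working in the quotient is precisely that the choice of extension operator $\mathcal{R}$ in Theorem \ref{renormproduct} is not canonical: two renormalizations of $ft$ differ by a distribution supported on $X$, hence lie in the same class modulo $\mathcal{D}^\prime_X(M)$, so $f\cdot[t]$ is well defined independently of $\mathcal{R}$.

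Next I would check the module axioms. Additivity and scalar-linearity in $t$ are clear because the regular product $ft$ on $M\setminus X$ is linear in $t$, and two candidate extensions of $f(t_1+t_2)$ and $f t_1 + f t_2$ agree off $X$, hence define the same class. For the compatibility $f\cdot(g\cdot[t])=(fg)\cdot[t]$ with $f,g\in\mathcal{M}(X,M)$ (recall $\mathcal{M}(X,M)$ is an algebra by the Leibniz-rule remark after Definition \ref{tempdefi}), both sides restrict on $M\setminus X$ to the genuine product $fgt$, and both are extendible distributions in $\mathcal{D}^\prime(M)$; since a distribution on $M$ is determined modulo $\mathcal{D}^\prime_X(M)$ by its restriction to $M\setminus X$, the two classes coincide. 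Finally $1\cdot[t]=[t]$ is immediate since $t$ itself extends $1\cdot t|_{M\setminus X}$.

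The one genuinely substantive input is the equality $\mathcal{T}_{M\setminus X}(M)=\mathcal{D}^\prime(M)/\mathcal{D}^\prime_X(M)$, which is exactly the content of Theorem \ref{euclideanmainthm} (via the partition-of-unity globalization of \S\ref{Partitionofunityargument}): moderate growth is equivalent to extendibility, so every class in the quotient has a well-defined restriction to $M\setminus X$ landing in $\mathcal{T}_{M\setminus X}(M)$, and conversely every element of $\mathcal{T}_{M\setminus X}(M)$ arises this way. Given this identification, the only thing to verify is that the operation above lands back in the quotient, i.e. that $ft$ is itself extendible — but this is precisely Theorem \ref{renormproduct}, which guarantees $\mathcal{R}(ft)\in\mathcal{D}^\prime(M)$. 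I expect the main (and only) obstacle to be purely bookkeeping: making the passage between ``distribution with moderate growth on $M\setminus X$'' and ``class of an extension in $\mathcal{D}^\prime(M)$'' precise enough that well-definedness of the action is transparent; once that dictionary is fixed, all module axioms follow from the corresponding identities for the honest product on the open set $M\setminus X$.
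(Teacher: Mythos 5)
Your proof is correct and follows essentially the same route as the paper's one-line argument, which also identifies $\mathcal{T}_{M\setminus X}(M)$ with the quotient $\mathcal{D}^\prime(M)/\mathcal{D}^\prime_X(M)$ via Theorem \ref{euclideanmainthm} and then invokes Theorem \ref{renormproduct}. You spell out the well-definedness and module axioms that the paper leaves implicit; note, though, that you could bypass the quotient bookkeeping entirely by appealing to the Proposition preceding Theorem \ref{renormproduct}, which shows directly that $ft$ has moderate growth (with exponent $s_1+s_2$) whenever $t\in\mathcal{T}_{M\setminus X}$ and $f\in\mathcal{M}(X,M)$, so that $\mathcal{T}_{M\setminus X}(M)$ is closed under the honest $C^\infty(M\setminus X)$-action and inherits the module axioms verbatim.
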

This was also proved
by Malgrange \cite[Proposition 1 p.~4]{Malgrange}.

Let us consider a function $g\in C^\infty(\mathbb{R}^n)$, 
$X=\{g=0\}$
and $gC^\infty(\mathbb{R}^n)$
is a \textbf{closed ideal} of $C^\infty(\mathbb{R}^n)$, then
a result of Malgrange \cite[inequality (2.1) p.~88]{Malgrangeideals}
yields that $g$ satisfies the Lojasiewicz inequality:
\begin{equation}
\forall K\text{ compact },\exists (C,s)\in\mathbb{R}^2_{\geqslant 0},
\forall x\in K, \vert g(x)\vert\geqslant Cd(x,X)^s. 
\end{equation}
It follows by Leibniz rule that $f=g^{-1}$ must be tempered along $X$.
We state and prove a specific case of
''renormalized product'' which is due to Malgrange \cite[Thm 2.1 p.~100]{Malgrangeideals}:
\begin{thm}
Let $M$ be a smooth paracompact manifold, let $f=g^{-1}$, $g\in C^\infty(M)$ such that
the ideal $gC^\infty(M)$ is closed.
Then 
\begin{eqnarray} 
\forall T\in\mathcal{D}^\prime(M),\exists S\in\mathcal{D}^\prime(M) \text{ s.t. }gS=T
\end{eqnarray} 
in particular, $S=fT$ outside $X$.
\end{thm}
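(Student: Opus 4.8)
The plan is to reduce the global statement on the manifold $M$ to the local Euclidean situation and then invoke the renormalized product machinery of Theorem \ref{renormproduct} together with the Lojasiewicz inequality.

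First I would observe that the hypothesis that $gC^\infty(M)$ is closed is local in the following sense: using a locally finite cover $(U_i)_i$ by relatively compact charts and a subordinated partition of unity $(\varphi_i)_i$ as in the partition of unity argument of Section \ref{Partitionofunityargument}, it suffices to solve the division problem $gS_i = \varphi_i T$ on each $U_i$, with $S_i$ supported in $\mathrm{supp}\,\varphi_i$, and then set $S = \sum_i S_i$, a locally finite and hence well-defined sum in $\mathcal{D}^\prime(M)$. One checks $gS = \sum_i gS_i = \sum_i \varphi_i T = T$. So we may work on $M = \mathbb{R}^n$ with $T \in \mathcal{E}^\prime(\mathbb{R}^n)$, $g \in C^\infty(\mathbb{R}^n)$, $X = \{g = 0\}$, and $gC^\infty(\mathbb{R}^n)$ closed. (One must be a little careful that closedness of $gC^\infty$ passes to the open subsets $U_i$; this is exactly the point where Malgrange's characterization via the Lojasiewicz inequality is convenient, since that inequality is manifestly local.)

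Next, on $\mathbb{R}^n$ the closedness of $gC^\infty(\mathbb{R}^n)$ gives, by the quoted result of Malgrange \cite[inequality (2.1) p.~88]{Malgrangeideals}, the Lojasiewicz inequality $|g(x)| \geqslant C d(x,X)^s$ on every compact set, and hence by Leibniz rule the function $f = g^{-1}$, which is smooth on $\mathbb{R}^n \setminus X$, is tempered along $X$, i.e. $f \in \mathcal{M}(X,\mathbb{R}^n)$. Now apply Theorem \ref{renormproduct} (in the form of the Proposition preceding it) to the pair $(T, f)$: the product $fT$, a priori only defined in $\mathcal{D}^\prime(\mathbb{R}^n \setminus X)$, has moderate growth along $X$, hence by Theorem \ref{euclideanmainthm} admits an extension $S \in \mathcal{D}^\prime(\mathbb{R}^n)$ (indeed in $\mathcal{E}^\prime(\mathbb{R}^n)$ if we keep track of supports). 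By construction $S = fT$ on $\mathbb{R}^n \setminus X$, so $gS = gfT = T$ on $\mathbb{R}^n \setminus X$, which means $gS - T$ is a distribution supported on $X$.

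The remaining, and genuinely delicate, step is to upgrade $gS - T = 0$ off $X$ to $gS = T$ everywhere, i.e. to show the correction term supported on $X$ actually vanishes. This does not follow formally: multiplication by $g$ annihilates any distribution of the form $g \cdot u$ only up to distributions concentrated on $X$, and in general $gS - T$ could be a nonzero distribution supported on $X$. The fix is to exploit the freedom in the choice of extension: the extensions of $fT$ differ by distributions supported on $X$, and one must choose the extension (equivalently, the renormalization scheme / counterterms $c_\lambda$) so that the obstruction $gS - T$ is killed. Concretely I expect one argues that, since $gC^\infty(\mathbb{R}^n)$ is closed, the transpose map $u \mapsto gu$ has closed range on $\mathcal{E}^\prime(\mathbb{R}^n)$ (by the closed range theorem, duality between $gC^\infty$ closed and $g \cdot \mathcal{E}^\prime$ weak-$*$ closed), and $T$ lies in that range because $T$ does so away from $X$ and the obstruction on $X$ can be corrected order by order in the transverse jets using that $g$ vanishes on $X$ — this is essentially Malgrange's original argument. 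This last point — producing a genuine solution rather than a solution modulo distributions on $X$ — is the main obstacle, and it is exactly where the full strength of the closed-ideal hypothesis (not merely the Lojasiewicz inequality) is used; the extension-theoretic part of the argument only gets one to $gS - T \in \mathcal{D}^\prime_X(\mathbb{R}^n)$.
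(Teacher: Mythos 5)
You correctly identify the genuine obstacle: the extension machinery produces an $S$ with $gS - T$ supported on $X$, not $gS = T$, and you honestly flag that the remaining step is where the closed-ideal hypothesis must do real work. But the closing sketch does not close that gap. Closed range alone does not give surjectivity, and the remark that the obstruction on $X$ ``can be corrected order by order in the transverse jets'' is a gesture toward Malgrange's original argument rather than a proof; actually carrying that correction out is the hard content of the theorem, and your proposal does not attempt it.

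The paper's proof bypasses the extension machinery entirely and is both shorter and complete: it shows directly that $m_g\colon t\in\mathcal{E}^\prime(M)\mapsto gt\in\mathcal{E}^\prime(M)$ is onto, by establishing that its range is both closed and dense. Closedness of $\mathrm{ran}(m_g)$ is dual to the hypothesis: $m_g$ is the transpose of multiplication $\varphi\mapsto g\varphi$ on $C^\infty(M)$, whose range $gC^\infty(M)$ is closed by assumption, so $\mathrm{ran}(m_g)$ is closed in $\mathcal{E}^\prime(M)$ by the closed range theorem for Fr\'echet spaces \cite[Thm 26.3 p.~307]{VogtMeise}. Density is obtained from the annihilator: $\mathrm{ran}(m_g)^\perp=\{\varphi\in C^\infty(M)\ \text{s.t.}\ g\varphi=0\}$, and this is $\{0\}$ because the open mapping theorem \cite[Thm 8.5 p.~60]{VogtMeise}, applied to the continuous surjection $C^\infty(M)\to gC^\infty(M)$ onto a closed (hence Fr\'echet) subspace, yields an estimate of the form $\Vert\varphi\Vert_m^K\leqslant\Vert g\varphi\Vert_{m^\prime}^{K^\prime}$, forcing $g\varphi=0\implies\varphi=0$. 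Closed plus dense gives all of $\mathcal{E}^\prime(M)$. Your Lojasiewicz-inequality and tempered-function preliminaries are correct statements, but they play no role in this argument; they only take you to $gS=T$ modulo $\mathcal{D}_X^\prime$, which is exactly the point at which the paper's functional-analytic argument starts afresh rather than the point at which it finishes.
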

Beware that the renormalized
product $S=fT$ is \emph{not uniquely 
defined}, however
it satisfies the equation $gS=T$ whereas
without the closedness assumption on $gC^\infty(M)$,
we would only have $gS=T$ modulo distributions
supported by $X$.
\begin{proof}
By partition of unity,
it suffices to prove that
the linear map
$m_g:t\in \mathcal{E}^\prime(M)\longmapsto gt\in \mathcal{E}^\prime(M)$
is onto if $gC^\infty(M)$ is closed in $C^\infty(M)$.
We will establish that
$m_g$ has closed range and that $ran(m_g)$ is dense
in $\mathcal{E}^\prime(M)$.
 
 $gC^\infty(M)$ is closed in $C^\infty(M)$ implies that
the transposed map:
$m_g^*:C^\infty(M)\longmapsto C^\infty(M)$ has closed range
therefore $m_g$ has closed range since $C^\infty(M)$ is Fr\'echet
and $\mathcal{E}^\prime(M)=C^\infty(M)^\prime$ (see \cite[Thm 26.3 p.~307]{VogtMeise}).
 
$gC^\infty(M)$ is closed in $C^\infty(M)$ hence it is Fr\'echet. 
By the open mapping Theorem \cite[Thm 8.5 p.~60]{VogtMeise}, 
$m_g:C^\infty(M)\mapsto gC^\infty(M)$ is a linear continuous, surjective map of Fr\'echet spaces
hence $m_g$ is \textbf{open}. In terms
of estimates, this implies that for any continuous
seminorm $\Vert.\Vert_{m}^K$ of $C^\infty(M)$, 
there is a continuous seminorm $\Vert.\Vert_{m^\prime}^{K^\prime}$ 
such that
$\Vert\varphi\Vert_{m}^K\leqslant \Vert(g\varphi)\Vert_{m^\prime}^{K^\prime}$
(see \cite[inequality (2.2) p.~88]{Malgrangeideals}), 
hence $g\varphi=0\implies \varphi=0$.
Then we conclude by the observation that
$ran(m_g)^\perp=\{\varphi\in C^\infty(M) \text{ s.t. }\forall t\in\mathcal{E}^\prime(M),
gt(\varphi)=0 \}=\{ \varphi \text{ s.t. }g\varphi=0 \}=\{0\}\implies ran(m_g)$ 
is everywhere dense in $\mathcal{E}^\prime(\mathbb{R}^n)$.
\end{proof}
\section{Renormalization of Feynman amplitudes in Euclidean 
quantum field theories.}
\subsection{Feynman amplitudes are extendible.}
We give the main application
of our extension techniques.
Our approach to renormalization
follows the philosophy of Brunetti--Fredenhagen \cite{Brunetti2,BFK,Brunetti}, Nikolov--Stora--Todorov \cite{NST}
which goes back to \cite{Epstein,EGS},
and is based
on the concept of extension of distributions.
However, we will use the  
beautiful formalism of \emph{renormalization maps}
of N. Nikolov \cite{NST,Nikolov} 
which is closest in spirit to the present paper.
In what follows, we will always 
assume that $(M,g)$ is a
smooth $d$-dimensional Riemannian manifold
with Riemannian metric $g$.
We denote by $\Delta_g$ the Laplace
Beltrami operator corresponding to $g$,
and we consider the \emph{Green function}
$G\in\mathcal{D}^\prime(M\times M)$ 
of the operator $\Delta_g+m^2,m\in\mathbb{R}_{\geqslant 0}$. 
$G$
is the Schwartz kernel of the operator
inverse of $\Delta_g+m^2$ (\cite[Appendix 1]{ShubinNantes})
which always exists when $M$ is compact and $m^2\notin \text{Spec}(\Delta_g)$. 
%has
%\emph{bounded geometry}
%in the sense of \cite[p.~33]{CGT} and \cite{Roeindex}
%(see also \cite[Definition 1.1 Appendix 1]{ShubinNantes},\cite[Def 1.1 p.~3]{ShubinBloch}),
%in particular when $M$ is compact or when $M=\mathbb{R}^d$. 
In the noncompact case, the 
general existence and uniqueness result for the Green function 
usually depends on the global properties
of $\Delta_g$ and $(M,g)$. If $(M,g)$ 
has
\emph{bounded geometry}
in the sense of \cite[p.~33]{CGT} and \cite{Roeindex}
(see also \cite[Definition 1.1 Appendix 1]{ShubinNantes},\cite[Def 1.1 p.~3]{ShubinBloch}),
then
one can find
in \cite[Appendix 1]{ShubinNantes} conditions
of spectral theoretic nature on $\Delta_g,m^2$
that imply the existence of an operator
inverse $\left(\Delta_g+m^2 \right)^{-1}:L^p(M)\mapsto L^p(M),p\in(1,+\infty)$
whose Schwartz kernel is $G$.

However if $G$ exists, then we show
a fundamental result
about the asymptotics 
of $G$ near the diagonal:
\begin{lemm}
Let $(M,g)$ be a smooth Riemannian manifold
and $\Delta_g$ the corresponding
Laplace operator.
If
$G\in\mathcal{D}^\prime(M\times M)$ is the fundamental solution of
$\Delta_g+m^2$, then $G$
is tempered along $D_2\subset M^2$.
\end{lemm}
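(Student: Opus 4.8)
The plan is to show that $G$ satisfies the moderate growth estimate of Definition 1.1 along the diagonal $D_2\subset M^2$ by exploiting elliptic regularity and the local structure theory of the parametrix of $\Delta_g+m^2$. Since temperedness along $X$ is a local notion (checked in charts, via the partition of unity argument of Section 1.1), I would first localize: cover $M$ by relatively compact charts and reduce to proving that, on a chart $U\times U$ with $U\subset\mathbb{R}^d$, the restriction of $G$ is tempered along the diagonal $\{x=y\}$. On such a chart the operator $\Delta_g+m^2$ is a classical elliptic operator of order $2$ with smooth coefficients, so by the standard construction of a parametrix (e.g. via the symbol calculus, or Hadamard's parametrix in the Riemannian setting) one can write $G = H + R$ on $U\times U$, where $R\in C^\infty(U\times U)$ is smooth (hence certainly tempered along $D_2$, with $s=0$) and $H$ is the singular part of the parametrix whose Schwartz kernel near the diagonal is an explicit function of the geodesic distance: a finite sum of terms of the form $a(x,y)\,r(x,y)^{2-d}$, $a(x,y)\,r(x,y)^{2-d+2k}$, and $a(x,y)\log r(x,y)$ (for $d$ even), with $a$ smooth and $r(x,y)$ comparable to the Euclidean distance $|x-y|$.

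The heart of the argument is then the elementary observation that each such model kernel is tempered along the diagonal. For $\varphi\in\mathcal{D}(U\times U\setminus D_2)$, one has $r(x,y)\geqslant c\,d(\mathrm{supp}\,\varphi,D_2)$ on the support of $\varphi$, so $|H(\varphi)|\leqslant C\,d(\mathrm{supp}\,\varphi,D_2)^{2-d}\,\|\varphi\|_0\cdot\mathrm{vol}(\mathrm{supp}\,\varphi)$ when $d>2$, which is an estimate of the form (\ref{moderategrowthestimate}) with $s=d-2$ and $m=0$; the logarithmic and higher-order terms are handled the same way (a logarithm is dominated by any negative power of the distance, and the positive powers of $r$ are bounded), and the derivatives $\partial^\alpha_{x,y}$ falling on $a(x,y)r(x,y)^{2-d}$ only worsen the power of $r$ by $|\alpha|$, still giving a negative-power bound. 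One must also check the moderate growth estimate of $H$ and $R$ as genuine distributions, i.e. that testing against $\varphi$ whose support does approach $D_2$ is controlled — but this is exactly what the displayed inequalities give. Combining, $G|_{U\times U} = H + R$ is tempered along $D_2$ in the chart; reassembling over the cover via a partition of unity (and noting, as in Section 1.1, that multiplying by a cutoff $\varphi_i$ preserves the moderate growth property and that temperedness is a $C^\infty$-module property by Leibniz), we conclude that $G$ is tempered, equivalently has moderate growth, along $D_2$.

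The main obstacle is not the estimate itself — that part is routine — but making the parametrix decomposition $G = H + R$ clean and global enough to feed into the localization machinery: one needs the smooth error $R$ to be globally smooth on $M\times M$ off a neighborhood one controls, and one needs the coefficients $a(x,y)$ and the distance function $r(x,y)$ to be genuinely smooth (which they are only in a neighborhood of the diagonal where geodesic normal coordinates are available). So the careful point is to first cut down to a tubular neighborhood of $D_2$ where Hadamard's parametrix construction is valid, treat the complement (where $G$ is smooth by elliptic hypoellipticity, since $(\Delta_g+m^2)G=\delta_{D_2}$ has smooth right-hand side away from $D_2$) separately, and then glue — at which stage no new divergence is introduced because away from $D_2$ the factor $d(\mathrm{supp}\,\varphi,D_2)^{-s}$ is bounded below. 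The rest is bookkeeping with Leibniz and the seminorms $\|\cdot\|_m^K$.
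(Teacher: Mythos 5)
Your proposal is correct and follows essentially the same route as the paper: localize to a chart, observe that $\Delta_g+m^2$ is elliptic with smooth coefficients, invoke the local polyhomogeneous structure of the Green function near the diagonal, and read off the required bound from the explicit asymptotic expansion. The only real difference is the source of that expansion: the paper writes $\mathcal E(x,z)=G(x,x+z)$ and quotes Shimakura's pseudodifferential-kernel asymptotics for elliptic operators with polyhomogeneous symbols, which directly produce the bound $\vert\partial^\alpha_x\partial^\beta_z\mathcal E\vert \leqslant c + \vert\dots\vert$ with terms $\vert z\vert^{l+q-d}(\mathcal A_q\log\vert z\vert + \mathcal B_q)$, whereas you appeal to the Hadamard parametrix and geodesic distance $r(x,y)$. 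These are two technical realizations of the same idea, and both land on a sum of homogeneous-plus-log terms whose derivatives grow like negative powers of the distance to the diagonal; your version is more geometric and requires the extra care you already flag (restrict to a tubular neighborhood where normal coordinates exist, glue using hypoellipticity away from $D_2$), while Shimakura's theorem hides this bookkeeping.

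One point to sharpen: the statement of the lemma is that $G$ is \emph{tempered} along $D_2$ in the sense of Definition \ref{tempdefi}, i.e. the function $G\in C^\infty(M^2\setminus D_2)$ together with all of its derivatives obeys a pointwise bound $\vert\partial^\alpha G\vert\leqslant C(1+d(\cdot,D_2)^{-s})$. This is the pointwise estimate that the parametrix expansion gives directly, and it is strictly stronger than the distributional moderate-growth estimate of Definition 1.1 which you also write out (pairing against $\varphi$ supported away from the diagonal). The distributional estimate is a corollary, not the thing to prove; the pointwise one is what is actually used downstream, since Theorem \ref{renormproduct} needs $G$ in the algebra $\mathcal M(D_2,M^2)$ of tempered functions. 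Your remark that ``the derivatives $\partial^\alpha$ falling on $a\,r^{2-d}$ only worsen the power of $r$ by $\vert\alpha\vert$'' is exactly the needed argument; the displayed bound on $\vert H(\varphi)\vert$ is then redundant for this lemma and can be dropped.
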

\begin{proof}
Temperedness is a local property
therefore it suffices to prove the Lemma for
some compact
domain $\Omega\times\Omega\subset\mathbb{R}^d\times\mathbb{R}^d$
and $g$ is a Riemannian metric on $\mathbb{R}^d$. 
The differential operator $\Delta_g+m^2$ 
is elliptic
with smooth coefficients, $G$ is a fundamental solution
of $\Delta_g+m^2$ in particular it is a \emph{parametrix} of $\Delta_g+m^2$
which implies it is
an elliptic pseudodifferential operator with polyhomogeneous symbol
\cite[Thm 2.7 p.~55]{Shimakura}. 
Set $\mathcal{E}(x,z)=G(x,x+z)$, then
by \cite[Theorem 3.3 p.~58]{Shimakura},
there exists two sequences $(\mathcal{A}_q(x,z))_q,(\mathcal{B}_q(x,z))_q$
of functions smooth on $\Omega$ w.r.t. $x$ 
and real analytic
on $\mathbb{S}^{d-1}$ w.r.t. $z$ such that
$\mathcal{E}$
satisfies the following estimate 
(which is adapted from \cite[(3.14) p.~59]{Shimakura}):
there is some $l$ s.t.
for every  multi-index $\alpha,\beta$, there exists $N\in\mathbb{N},c\in\mathbb{R}$ 
s.t.
\begin{eqnarray}
\vert\partial^\alpha_x\partial^\beta_z\mathcal{E}(x,z) \vert\leqslant c+
\vert \partial^\alpha_x\partial^\beta_z \sum_{q=0}^N 
\left(\vert z\vert^{l+q-d}\left(\mathcal{A}_q(x,\frac{z}{\vert z\vert})\log\vert z\vert+\mathcal{B}_q(x,\frac{z}{\vert z\vert})\right)\right) \vert 
\end{eqnarray} 
if $x\in \Omega,\vert z\vert\leqslant 1$.
The right hand side has moderate growth
along $\{z=0\}$ and $\mathcal{E}(x,z)$ 
is thus tempered along $\{z=0\}$ which implies that
$G(x,y)$ is tempered along $D_2$.
\end{proof}
%However if $M$ has bounded geometry then we have a better bound
%on $G$:
%\begin{lemm}
%Let $(M,g)$ be a smooth Riemannian manifold
%with bounded geometry and $\Delta_g$ the corresponding
%Laplace operator.
%The fundamental solution 
%$G$ of
%$\Delta_g$
%is tempered along $D_2\subset M^2$.
%\end{lemm}
%\begin{proof}
%$\Delta_g$
%is an elliptic operator with smooth coefficients therefore, 
%by \cite[Theorem 3.7 in Appendix 1]{ShubinNantes}, we have
%the following estimate:
%\begin{equation}
%\vert \partial^{\alpha}_x\partial^\beta_y G(x,y)\vert\leqslant C_{\alpha\beta}
%d(x,y)^{2-d-\vert\alpha\vert-\vert\beta\vert}\exp(-\varepsilon d(x,y))
%\end{equation}
%where $d(x,y)$ is the Riemannian distance function
%in $M^2$ and $G$ is thus tempered.
%\end{proof}

\paragraph{Configuration spaces.}

For every finite subset $I\subset \mathbb{N}$ and open subset $U\subset M$, we define
the configuration space
$U^I=\text{Maps }(I\mapsto U)=\{(x_i)_{i\in I}\text{ s.t. }x_i\in U,\forall i\in I \} $
of $\vert I\vert$ particles in $U$ 
labelled by the subset $I\subset\mathbb{N}$.
In the sequel, we will distinguish two types of diagonals
in $U^I$, the \emph{big diagonal} 
$D_I=\{(x_i)_{i\in I} \text{ s.t. }\exists (i\neq j)\in I^2, x_i=x_j \}$
which represents configurations where
at least two particles
collide, and the \emph{small diagonal}
$d_I=\{(x_i)_{i\in I} \text{ s.t. } \forall (i,j)\in I^2, x_i=x_j  \}$
where all particles in $U^I$ collapse over the same element.
The configuration space $M^{\{1,\dots,n\}}$ and the corresponding \emph{big and small} diagonals $D_{\{1,\dots,n\}},d_{\{1,\dots,n\}}$ will be denoted by
$M^n,D_n,d_n$ for simplicity.
\begin{thm}\label{extensionfeynmanamplitudes}
Let $(M,g)$ be a smooth Riemannian manifold, 
$\Delta_g$ the corresponding
Laplace operator and $G$ the Green
function of $\Delta_g+m^2$. 
Then all ''Feynman amplitudes'' of the form:
\begin{equation}
\prod_{1\leqslant i<j\leqslant n} G^{n_{ij}}(x_i,x_j)\in C^\infty(M^n\setminus D_n), n_{ij}\in\mathbb{N}
\end{equation}
are \textbf{extendible} in $\mathcal{D}^\prime(M^n)$.
\end{thm}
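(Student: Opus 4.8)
The plan is to reduce the statement to the two main tools already established: Theorem \ref{mainthm} (moderate growth $\Leftrightarrow$ extendibility) and Corollary \ref{tempfunmodules} (the space $\mathcal{T}_{M\setminus X}(M)$ of extendible distributions is a left $\mathcal{M}(X,M)$-module), together with the Lemma just proved that $G$ is tempered along the diagonal $D_2\subset M^2$. Since extendibility is a local question, I would first localize on $M^n$ by a partition of unity subordinate to a cover by products of coordinate charts, exactly as in \ref{Partitionofunityargument}, so that it suffices to work on a product of Euclidean balls and to control the Feynman amplitude near one irreducible piece of the big diagonal $D_n$ at a time.

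Next I would introduce, for each pair $i<j$, the pullback $\pi_{ij}^*G$ along the projection $\pi_{ij}:M^n\to M^2$ onto the $i$-th and $j$-th factors. The key observation is that $\pi_{ij}^*G$ is tempered along $\pi_{ij}^{-1}(D_2)$, which is the partial diagonal $\{x_i=x_j\}\subset M^n$; temperedness is preserved under such smooth submersions because the relevant derivative estimates (Definition \ref{tempdefi}) pull back, and the distance to $\{x_i=x_j\}$ in $M^n$ is comparable to the distance to $D_2$ in $M^2$ composed with $\pi_{ij}$. Consequently $\pi_{ij}^*G^{n_{ij}}$ is tempered along $\{x_i=x_j\}$ by the algebra property of $\mathcal{M}$, hence a fortiori tempered along the full big diagonal $D_n=\bigcup_{i<j}\{x_i=x_j\}$ since $\{x_i=x_j\}\subset D_n$ forces $d(\cdot,D_n)\leqslant d(\cdot,\{x_i=x_j\})$. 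Taking the product over all pairs $i<j$ and using once more that $\mathcal{M}(D_n,M^n)$ is an algebra, the Feynman amplitude $\prod_{1\leqslant i<j\leqslant n}G^{n_{ij}}(x_i,x_j)$ is itself tempered along $D_n$, i.e. it lies in $\mathcal{M}(D_n,M^n)$.

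To finish, I would note that the constant function $1\in\mathcal{D}^\prime(M^n)$ trivially restricts to a smooth function on $M^n\setminus D_n$, so by Corollary \ref{tempfunmodules} the product of a tempered function along $D_n$ with $1$ lies in $\mathcal{T}_{M^n\setminus D_n}(M^n)$, the space of distributions with moderate growth along $D_n$; but outside $D_n$ this product is exactly the Feynman amplitude. Hence the Feynman amplitude has moderate growth along $D_n$, and by Theorem \ref{mainthm} it is extendible in $\mathcal{D}^\prime(M^n)$.

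The main obstacle I anticipate is not the module/extendibility machinery, which is essentially plug-and-play, but verifying cleanly that temperedness along $D_2\subset M^2$ is stable under the pullbacks $\pi_{ij}^*$ and that the various partial diagonals $\{x_i=x_j\}$ interact correctly with the full diagonal $D_n$ at its deeper strata (where several coincidences occur simultaneously). Near such deeper strata one must check that the estimate $\sup_{|\alpha|\leqslant m}|\partial^\alpha(\prod\pi_{ij}^*G^{n_{ij}})|\leqslant C(1+d(x,D_n)^{-s})$ survives the multiplication of finitely many factors; this follows from Leibniz, but requires that each $d(x,\{x_i=x_j\})^{-s_{ij}}$ be bounded by a power of $d(x,D_n)^{-1}$, which is immediate from $d(x,D_n)\leqslant d(x,\{x_i=x_j\})$. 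So the ``hard part'' is really just a careful but routine bookkeeping of distances and derivatives in the product manifold, after which the two cited results close the argument.
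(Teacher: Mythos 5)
Your proof is correct and follows essentially the same route as the paper: pull back the temperedness of $G$ along $D_2$ to temperedness of each factor $\pi_{ij}^*G$ along the partial diagonal $\{x_i=x_j\}$, dominate $d(\cdot,\{x_i=x_j\})^{-s}$ by $d(\cdot,D_n)^{-s}$ to get temperedness along $D_n$, use that $\mathcal{M}(D_n,M^n)$ is an algebra, and close with the module/extendibility machinery applied to $t=1$. The paper's version is terser---it leaves the pullback step along the projections $\pi_{ij}$ implicit and appeals directly to Theorem \ref{renormproduct} rather than Corollary \ref{tempfunmodules}---but the underlying argument is the same.
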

\begin{proof}
$d_{ij}\subset D_n\implies \forall s\geqslant 0, d(x,d_{ij})^{-s}\leqslant
d(x,D_n)^{-s}$ together with the fact that
$G(x_i,x_j)$ is tempered
along $d_{ij}$ imply that 
$G(x_i,x_j)\in\mathcal{M}(D_n,M^n)$.
Since $\mathcal{M}(D_n,M^n)$ is an \textbf{algebra}, $\prod_{1\leqslant i<j\leqslant n} G^{n_{ij}}(x_i,x_j)\in 
\mathcal{M}(D_n,M^n)$ and is therefore extendible
on $M^n$ by Theorem \ref{renormproduct}.
\end{proof}

\subsection{Renormalization maps, locality and the factorization property.}

\paragraph{The vector subspace $\mathcal{O}(D_I,.)$ generated
by Feynman amplitudes.}\label{defifeynmanamplitudesmodule}

In QFT, renormalization is not only extension of Feynman amplitudes
in configuration space but our extension procedure
should satisfy some consistency conditions in order to
be compatible with the fundamental requirement of \textbf{locality}.

Recall that for any open subset $\Omega\subset M^I$, we denote
by $\mathcal{M}(D_I,\Omega)$ the \textbf{algebra} of tempered functions
along $D_I$. 
%This algebra is filtered by the growth exponent along $D_I$,
%$f\in \mathcal{M}_{s}(D_I,\Omega)$ if
%\begin{equation}
%\forall x\in \Omega ,\vert f(x)\vert\leqslant C(1+d(x,D_I)^{-s}).
%\end{equation}
%It is immediate that $0\leqslant s\leqslant s^\prime\implies
%\mathcal{M}_s(D_I,\Omega)\subset \mathcal{M}_{s^\prime}(D_I,\Omega)$
%hence $\left(\mathcal{M}_s(D_I,\Omega)\right)_{s\in\mathbb{R}_{\geqslant 0}}$.
We introduce the vector space $\mathcal{O}(D_I,\Omega)\subset \mathcal{M}(D_I,\Omega)$
generated by the Feynman amplitudes
\begin{equation}
\mathcal{O}(D_I,\Omega)=\left\langle\left( \prod_{i<j\in I^2} G^{n_{ij}}(x_i,x_j)\right)_{n_{ij}}\right\rangle_{\mathbb{C}}.
\end{equation}
%the space $\mathcal{O}(D_I,\Omega)$ inherits
%of the filtration of $\mathcal{M}(D_I,\Omega)$.
\paragraph{Axioms for renormalization maps: factorization property as a consequence of locality.}

We define a collection of \emph{renormalization maps} 
$\left(\mathcal{R}_{\Omega\subset M^I}\right)_{\Omega,I}$ 
where $I$ runs over the finite subsets of $\mathbb{N}$
and $\Omega$ runs over the open subsets of $M^I$
which satisfy the following
axioms which are simplified versions 
of those figuring in \cite[2.3 p.~12--14]{Nikolov} \cite[Section 5 p.~33--35]{NST}:

\begin{defi}\label{axiomsrenormmaps}

\begin{enumerate}
\item For every $I\subset \mathbb{N},\vert I\vert<+\infty$, $\Omega\subset M^I$,
$\mathcal{R}_{\Omega\subset M^I}$ is a \textbf{linear extension operator}:
\begin{equation}
\mathcal{R}_{\Omega\subset M^I}:\mathcal{O}(D_I,\Omega)\longmapsto \mathcal{D}^{\prime}(\Omega).
\end{equation} 
%where $\mathcal{D}^{\prime,[s]+1}(\Omega)$ are distributions of order $[s]+1$.
\item For all inclusion of open subsets
$\Omega_1\subset\Omega_2\subset M^I$, we require that: 
\begin{eqnarray*}
\forall f\in\mathcal{O}(D_I,\Omega_2),\forall\varphi\in\mathcal{D}(\Omega_1)  \\
\left\langle \mathcal{R}_{\Omega_2\subset M^I}(f),\varphi\right\rangle
=\left\langle\mathcal{R}_{\Omega_1\subset M^I}(f),\varphi \right\rangle.
\end{eqnarray*}
\item The renormalization maps satisfy the \textbf{factorization property}.
If $(U,V)$ are disjoint open subsets of $M$, and $(I,J)$ are
disjoint finite subsets of $\mathbb{N}$, 
$\forall (f,g)\in \mathcal{O}(D_I,U^I)\times \mathcal{O}(D_J,V^J)$ :
\begin{eqnarray}
\mathcal{R}_{(U^I\times V^J)\subset M^{I\cup J}}(f\otimes g)&=&\underset{\in \mathcal{D}^\prime(U^I)}{\underbrace{\mathcal{R}_{U^I\subset M^I}(f)}}\otimes \underset{\in \mathcal{D}^\prime(V^J)}{\underbrace{\mathcal{R}_{V^J\subset M^J}(g)}}\in \mathcal{D}^\prime(U^I\times V^J)
\end{eqnarray}
\end{enumerate}
\end{defi}

The most important property is the factorization property $(3)$ which is imposed in
\cite[equation (2.2) p.~5]{NST}.
%and implies that for a Feynman amplitude $\mathcal{R}_{M^n}(t)=\mathcal{R}_{M^n}\prod_{1\leqslant i<j\leqslant n} G^{n_{ij}}(x_i,x_j)$
%restricted on $U^I\times V^J, I=\{1,\dots,n\}\setminus J $ where $U\cap V=\emptyset$:
%\begin{eqnarray*}
%\mathcal{R}_{U^I\times V^J\subset M^n}

\paragraph{Remarks on the axioms of the Renormalization maps.}

To define $\mathcal{R}$ on $M^I$, 
it suffices to define $\mathcal{R}_{\Omega_i\subset M^I}$ 
for an open cover $(\Omega_i)_i$
of $M^I$ (they do not necessarily coincide on the overlaps $\Omega_i\cap\Omega_j$)
and glue the determinations by a partition of unity. 
\paragraph{Uniqueness property of renormalization maps.}
The following Lemma is proved in \cite[Lemmas 2.2, 2.3 p.~6]{NST}
and tells us that if a collection
of renormalization maps $(\mathcal{R}_{\Omega\subset M^I})_{\Omega,I}$ exists 
and satisfies the list of axioms \ref{axiomsrenormmaps}
then the restriction of $\mathcal{R}_{M^n}(\prod_{1\leqslant i<j\leqslant n} G^{n_{ij}}(x_i,x_j))$
on $M^n\setminus d_n$ would be 
uniquely determined
by the renormalizations $\mathcal{R}_{M^I}$ for all $\vert I\vert<n$ because of
the factorization axiom.
\begin{lemm}\label{keylemmaNST}
Let $(\mathcal{R}_{\Omega\subset M^I})_{\Omega,I}$ be a collection
of renormalization maps satisfying the axioms
\ref{axiomsrenormmaps}. Then for any Feynman amplitude
$\prod_{1\leqslant i<j\leqslant n} G^{n_{ij}}(x_i,x_j)$,
the renormalization
$\mathcal{R}_{M^n\setminus d_n\subset M^n}
(\prod_{1\leqslant i<j\leqslant n} G^{n_{ij}}(x_i,x_j))$ is uniquely determined
by the renormalizations 
$\mathcal{R}_{M^I}(\prod_{i<j\in I^2} G^{n_{ij}}(x_i,x_j))$ 
for all $\vert I\vert<n$.
\end{lemm}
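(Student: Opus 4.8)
The plan is to exploit the factorization axiom (3) together with the restriction/gluing axiom (2) to reconstruct $\mathcal{R}_{M^n\setminus d_n\subset M^n}$ on an explicit open cover of $M^n\setminus d_n$. First I would observe that $M^n\setminus d_n$ is covered by the open sets $\Omega_{(U,V)} = U^I\times V^J$ where $\{I,J\}$ ranges over all nontrivial partitions $I\sqcup J=\{1,\dots,n\}$ (with $I,J\neq\emptyset$) and $(U,V)$ ranges over pairs of disjoint open subsets of $M$; indeed, if $(x_1,\dots,x_n)\notin d_n$ then not all the points coincide, so the index set splits into two nonempty blocks that can be separated by disjoint opens, and by choosing $U,V$ small enough one lands in such an $\Omega_{(U,V)}$. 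On each such chart the factorization axiom forces
\begin{equation}
\mathcal{R}_{\Omega_{(U,V)}\subset M^{I\cup J}}\!\left(\prod_{1\leqslant i<j\leqslant n} G^{n_{ij}}(x_i,x_j)\right)
= \mathcal{R}_{U^I\subset M^I}(f_I)\otimes \mathcal{R}_{V^J\subset M^J}(g_J)
\end{equation}
where $f_I=\prod_{i<j\in I^2}G^{n_{ij}}(x_i,x_j)$, $g_J=\prod_{i<j\in J^2}G^{n_{ij}}(x_i,x_j)$, and the ``cross'' factors $\prod_{i\in I,j\in J}G^{n_{ij}}(x_i,x_j)$, which are \emph{smooth} on $U^I\times V^J$ since $U\cap V=\emptyset$, are simply multiplied in as a genuine smooth multiplier — this uses that $\mathcal{R}$ is a linear extension operator (axiom 1) valued in distributions, and smooth multiplication is well-defined. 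Then by axiom (2) (compatibility with restriction to smaller opens), $\mathcal{R}_{U^I\subset M^I}(f_I)$ is the restriction of $\mathcal{R}_{M^I\subset M^I}(f_I)$ to $U^I$, and likewise for $g_J$; hence the right-hand side, and therefore $\mathcal{R}_{M^n\subset M^n}(\prod G^{n_{ij}})$ restricted to $\Omega_{(U,V)}$, is completely determined by the data $\big(\mathcal{R}_{M^I}(\prod_{i<j\in I^2}G^{n_{ij}}(x_i,x_j))\big)_{|I|<n}$.

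Next I would argue that these local determinations are consistent, i.e.\ they glue. The subtle point is that two different partitions $\{I,J\}$ and $\{I',J'\}$, or two different pairs of separating opens, may both be valid on a common smaller open set $\Omega'\subset \Omega_{(U,V)}\cap\Omega_{(U',V')}$; by axiom (2), restricting either determination to $\Omega'$ gives the value of the single distribution $\mathcal{R}_{\Omega'\subset M^n}(\prod G^{n_{ij}})$, so they automatically agree on the overlap. (Alternatively, as the excerpt's remark on the axioms notes, one may simply fix one determination per chart and glue by a partition of unity subordinate to the cover $(\Omega_{(U,V)})$; either way the resulting distribution on $M^n\setminus d_n$ is forced.) Thus the value of $\mathcal{R}_{M^n\setminus d_n\subset M^n}(\prod G^{n_{ij}})$ on every point of $M^n\setminus d_n$ is dictated, and since $\mathcal{D}'(M^n\setminus d_n)$ is a sheaf, the global distribution is uniquely pinned down by the lower-order renormalization maps.

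The main obstacle, and the place where care is genuinely needed, is the bookkeeping that the cross factors $\prod_{i\in I, j\in J}G^{n_{ij}}(x_i,x_j)$ really are smooth on $U^I\times V^J$ and hence legitimate multipliers of a distribution: this rests on $G$ being smooth off the diagonal $D_2$ (so that $G(x_i,x_j)\in C^\infty$ when $x_i\in U$, $x_j\in V$ with $U\cap V=\emptyset$), which in turn is the ellipticity/parametrix fact established in the Lemma above. A secondary point requiring attention is checking that the open sets $\Omega_{(U,V)}$ genuinely cover $M^n\setminus d_n$ and that, on a given chart, the factorization rewrite is unambiguous once $\{I,J\}$ is chosen — but these are elementary once the separation-of-blocks observation is in hand. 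Everything else is a direct unwinding of the three axioms; I expect no analytic difficulty, only the combinatorial care of tracking which factors are ``internal'' to a block versus ``crossing'' between blocks.
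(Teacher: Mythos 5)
Your covering-by-products / factorization / sheaf-gluing strategy is exactly the argument the paper defers to NST, and the architecture is right. But there is one step whose justification is wrong and which, left as you state it, would not close. You write that the cross factors $\prod_{i\in I,\,j\in J} G^{n_{ij}}(x_i,x_j)$ "are simply multiplied in as a genuine smooth multiplier --- this uses that $\mathcal{R}$ is a linear extension operator (axiom 1)." Being a linear extension operator only means $\mathcal{R}_\Omega(A)\big|_{\Omega\setminus D}=A$; it gives you \emph{no} compatibility of the form $\mathcal{R}_\Omega(h\,A)=h\,\mathcal{R}_\Omega(A)$ for smooth $h$. Indeed $h\,\mathcal{R}_\Omega(A)$ is just one extension of $h\,A$ among many differing by a distribution supported on the singular set, and an arbitrary linear extension operator (e.g.\ a Hahn--Banach one) has no reason to select it. Concretely, axiom (3) as written in Definition \ref{axiomsrenormmaps} applies only to pure tensors $f\otimes g$ with $f\in\mathcal{O}(D_I,U^I)$, $g\in\mathcal{O}(D_J,V^J)$, whereas the restriction of the full amplitude to $U^I\times V^J$ is $f_I\otimes g_J$ multiplied by the coupling factor $h_{IJ}$, which is \emph{not} a tensor in $\mathcal{O}(D_I,U^I)\otimes\mathcal{O}(D_J,V^J)$. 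So the axioms as literally stated do not determine $\mathcal{R}$ on the full amplitude.

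What is actually needed --- and what the paper itself invokes in Step 2 of the proof of Theorem \ref{renormthmqft} --- is the factorization axiom in its full NST form, which asserts precisely that on $U^I\times V^J$ one has $\mathcal{R}_{U^I\times V^J}\bigl(f_I\, g_J\, h_{IJ}\bigr)=h_{IJ}\cdot\bigl(\mathcal{R}_{U^I}(f_I)\otimes\mathcal{R}_{V^J}(g_J)\bigr)$, i.e.\ the smooth cross factor is absorbed \emph{multiplicatively} by decree. This is an additional requirement, not a consequence of linearity, and the paper flags its Definition \ref{axiomsrenormmaps} as a "simplified version" of the NST axioms for exactly this reason. Once you replace "this follows from axiom (1)" by "this is the content of the (unsimplified) factorization axiom," the rest of your argument --- the covering of $M^n\setminus d_n$ by $U^I\times V^J$, the restriction compatibility from axiom (2) to reduce to $\mathcal{R}_{M^I}$, $\mathcal{R}_{M^J}$, and the sheaf-theoretic gluing --- is correct and matches the intended proof.
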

\begin{proof}
See \cite[p.~6-7]{NST} for the detailed proof.
\end{proof}

Beware that the above Lemma
\textbf{does not imply the existence} of 
renormalization maps but only that
they must satisfy certain 
consistency conditions if they exist.
\subsection{The existence Theorem for renormalization maps.}
Now we give a short proof 
of the
existence of renormalization maps
on general Riemannian manifolds.
Recall $(M,g)$ is a smooth Riemannian manifold, 
$\Delta_g$ the corresponding
Laplace operator, $G$ the Green
function of $\Delta_g+m^2$ and for any open subset $\Omega\subset M^I$,
$\mathcal{O}(D_I,\Omega)$ is the vector space
generated by the Feynman amplitudes.
\begin{thm}\label{renormthmqft}
There exists
a collection of renormalization maps
$\left(\mathcal{R}_{\Omega\subset M^I}\right)_{\Omega,I}$ 
where $I$ runs over the finite subsets of $\mathbb{N}$
and $\Omega$ runs over the open subsets of $M^I$
which satisfies the axioms \ref{axiomsrenormmaps}.
\end{thm}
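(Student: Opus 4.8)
The plan is to construct the collection $(\mathcal{R}_{\Omega\subset M^I})_{\Omega,I}$ by induction on $|I|$, using the extension machinery of Theorem \ref{mainthm} and Theorem \ref{renormproduct} together with a partition of unity on configuration space to glue local determinations. The base case $|I|\leqslant 1$ is trivial: $D_I=\emptyset$, every Feynman amplitude is already smooth, and $\mathcal{R}_{\Omega\subset M^I}$ is just the identity inclusion $C^\infty(\Omega)\hookrightarrow\mathcal{D}^\prime(\Omega)$. For the inductive step, assume renormalization maps have been constructed and shown to satisfy axioms \ref{axiomsrenormmaps} for all index sets of cardinality $<n$. Given $I$ with $|I|=n$ and an open set $\Omega\subset M^I$, I first cover $\Omega$ by two types of open sets: those of the form $\Omega\cap(U^I)$ with $U\subset M$ such that $\Omega\cap U^I$ avoids the small diagonal $d_I$, and a neighbourhood of $d_I$ itself. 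On the first type, any point lies outside $d_I$, so a suitable refinement produces a cover by products $U^I_1\times\cdots$ with disjoint factors; here the factorization axiom $(3)$ together with axiom $(2)$ \emph{forces} the determination of $\mathcal{R}$ in terms of lower-cardinality maps (this is precisely Lemma \ref{keylemmaNST}), so I simply take that forced value and note it is consistent on overlaps because the lower maps are, by induction, compatible.

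The genuinely new analytic input is needed near the small diagonal $d_I$. On a neighbourhood $\Omega_0$ of $d_I$ in $\Omega$, the Feynman amplitude $\prod_{i<j}G^{n_{ij}}(x_i,x_j)$ is, by Theorem \ref{extensionfeynmanamplitudes}, a tempered function along $D_I$, hence has moderate growth along $D_I$, and \emph{a fortiori} along the smaller closed set $d_I$. By Theorem \ref{euclideanmainthm} (transported to the manifold via the partition of unity argument of \S\ref{Partitionofunityargument}) it is therefore extendible across $d_I$; I choose a fixed renormalization map in the sense of Theorem \ref{Whitneythm} (a choice of renormalization scheme $B$) to extend it, and \emph{on the part of $\Omega_0$ away from $d_I$} I use the already-forced determination coming from the lower maps as above, patched to the $d_I$-extension by a partition of unity subordinate to the cover $\{\Omega_0\setminus d_I,\ \text{nbhd of }d_I\}$. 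Concretely: pick $\rho\in C^\infty(M^I)$ with $\rho=1$ near $d_I$ and $\rho$ supported in $\Omega_0$, extend $(1-\rho)\cdot(\text{amplitude})$ by the forced lower-order determination and $\rho\cdot(\text{amplitude})$ by the chosen scheme, and add. Because $\mathcal{O}(D_I,\Omega)$ is only a finite-dimensional-on-each-amplitude vector space, linearity of the construction in the amplitude is automatic once I fix the scheme $B$ once and for all (independently of $\Omega$ and $I$, by working in local charts).

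The main obstacle — and the step requiring care — is verifying that the glued maps genuinely satisfy all three axioms simultaneously, in particular that the factorization axiom $(3)$ holds \emph{globally} and not merely on the overlaps where it was imposed by fiat. The point is that when $(U,V)$ are disjoint and $I,J$ disjoint, the set $U^I\times V^J$ lies entirely outside the small diagonal $d_{I\cup J}$ (indeed outside the part of $D_{I\cup J}$ mixing an $I$-index with a $J$-index), so the $d_{I\cup J}$-extension component of $\mathcal{R}_{(U^I\times V^J)\subset M^{I\cup J}}$ never activates there; what survives is exactly the forced lower-order determination, and an induction shows that this forced value on a product of disjoint configuration pieces is the tensor product of the corresponding renormalizations — this is where I invoke Lemma \ref{keylemmaNST} in its constructive direction. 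Axiom $(2)$ (restriction compatibility) follows because every ingredient — the Whitney-scheme extension near $d_I$, the partition of unity, and the forced determination — is local, i.e. commutes with restriction to smaller open sets. Axiom $(1)$ is immediate: each $\mathcal{R}_{\Omega\subset M^I}$ is by construction a linear operator $\mathcal{O}(D_I,\Omega)\to\mathcal{D}^\prime(\Omega)$ that restricts to the identity off $D_I$, since both the scheme extension and the forced determination do. Assembling these checks completes the induction and hence the proof; the bookkeeping of compatible partitions of unity across all $(\Omega,I)$ is the only place where one must be genuinely attentive.
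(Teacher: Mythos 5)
Your inductive scaffolding and use of the covering $(C_{IJ})$ to force the determination on $M^n\setminus d_n$ are in the right spirit, but the step that carries the analytic content — the extension across $d_I$ — contains a genuine gap. You propose to extend $\rho\cdot\prod_{i<j}G^{n_{ij}}(x_i,x_j)$ across $d_I$ by a fixed Whitney scheme, citing that the amplitude has moderate growth along $D_I$ ``and a fortiori along $d_I$''. But this object is only a smooth function on $M^I\setminus D_I$; it is not a distribution on $M^I\setminus d_I$, so ``extending it across $d_I$'' makes no sense until one has assigned it a value on $D_I\setminus d_I$, and the only axiom-compatible value there is the factorization-forced one. If instead you extend the naive amplitude across all of $D_I$ by a once-and-for-all Whitney scheme, the result will in general disagree with the forced determination $\mathcal{R}_{M^I\setminus d_I}$ on the part of $D_I$ lying near but off $d_I$ — and this region cannot be avoided, since $D_I\setminus d_I$ accumulates on $d_I$ and so meets the support of $\rho$. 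Consequently your glued distribution, restricted to $M^I\setminus d_I$, is not the forced determination, and the factorization axiom fails. Lemma~\ref{keylemmaNST} is no safeguard here: it tells you what the answer on $M^n\setminus d_n$ must be, not that a scheme-extension of the naive amplitude produces it.

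The paper resolves this by extending the \emph{already renormalized} distribution rather than the naive amplitude. On each $C_{IJ}$ the forced value factors as the distribution $\mathcal{R}_{M^I}(G_I)\otimes\mathcal{R}_{M^J}(G_J)\in\mathcal{D}^\prime(M^n)$, well defined on all of $M^n$ by induction, times the function $\prod_{(i,j)\in I\times J}G^{n_{ij}}(x_i,x_j)$, which is tempered along $\partial C_{IJ}$. The decisive ingredient is Lemma~\ref{temperedpartitionofunity}: a partition of unity $(\chi_{IJ})$ subordinate to $(C_{IJ})$ whose members are \emph{tempered along} $d_n$; this is obtained via the tubular neighbourhood of $d_n$ and the scale invariance of the $C_{IJ}$ in the normal directions, and an ordinary smooth partition of unity would not have this property. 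Then $\chi_{IJ}\prod_{(i,j)\in I\times J}G^{n_{ij}}$ lies in $\mathcal{M}(\partial C_{IJ},M^n)$, so Theorem~\ref{renormproduct} extends $\chi_{IJ}\mathcal{R}_{C_{IJ}}(\cdot)$ to $\mathcal{D}^\prime(M^n)$; because $\chi_{IJ}$ vanishes near $\partial C_{IJ}\setminus d_n$, this extension agrees with $\chi_{IJ}\mathcal{R}_{C_{IJ}}(\cdot)$ on all of $M^n\setminus d_n$, and summing over $(I,J)$ reproduces the forced determination exactly. Factorization is then automatic. The tempered partition of unity is the missing analytic input in your proposal, and without it the gluing near $d_n$ cannot be made compatible with the factorization axiom.
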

\begin{proof}
We proceed by induction on $n$. 

%By localization, it suffices
%that the result holds true in the neighborhood
%of any element $(x_1,\dots,x_n)\in M^n$. We call this neighborhood
%$\Omega$ and assume it is relatively compact.
It 
suffices to establish the existence
of $\mathcal{R}_{M^n}\left(\prod_{1\leqslant i<j\leqslant n} G^{n_{ij}}(x_i,x_j)\right)$
for generic Feynman amplitudes
$\prod_{1\leqslant i<j\leqslant n} G^{n_{ij}}(x_i,x_j)\in\mathcal{O}(D_n,M^n)$.

\textbf{Step 1},
we initialize our induction
with $\mathcal{R}_{M^2}:\mathcal{O}\left(D_2, M^2\right)\mapsto \mathcal{D}^\prime(M^2)$ 
whose existence is guaranteed by Theorem \ref{extensionfeynmanamplitudes}.

\textbf{Step 2}, by Lemma \ref{coveringlemmma},
the complement $M^n\setminus d_n$ of the
\emph{small diagonal} $d_n$ in $M^n$ 
is covered by open sets
of the form
$C_{IJ}=M^n\setminus\left(\cup_{(i,j)\in I\times J}d_{ij}\right)$ where
$I\cup J=\{1,\dots,n\}, I\cap J=\emptyset$. In the sequel, we
write $\mathcal{R}_{C_{IJ}}$ instead of 
$\mathcal{R}_{C_{IJ}\subset M^n}$ for simplicity.

By factorization property, we also find that:
\begin{eqnarray*}
\mathcal{R}_{C_{IJ}}\left(\prod_{1\leqslant i<j\leqslant n} G^{n_{ij}}(x_i,x_j)\right)&=&\underset{\in\mathcal{D}^\prime(M^I)}{\underbrace{\mathcal{R}_{M^I}\left(G_I\right)}}
\underset{\in\mathcal{D}^\prime(M^J)}{\underbrace{\mathcal{R}_{M^J}\left(G_J\right)}}
\underset{\in \mathcal(\partial C_{IJ},M^n)}{\underbrace{\prod_{(i,j)\in I\times J} G^{n_{ij}}(x_i,x_j)}}\\
G_I=\prod_{(i<j)\in I^2} G^{n_{ij}}(x_i,x_j),&&
G_J=\prod_{(i<j)\in J^2} G^{n_{ij}}(x_i,x_j)
\end{eqnarray*}
therefore the renormalization map 
$\mathcal{R}_{M^n\setminus d_n}$ is entirely
determined by the
renormalization maps $\mathcal{R}_{M^I}$
for $\vert I\vert\leqslant n-1$ and the determination
is in fact \textbf{unique} according
to Lemma \ref{keylemmaNST}.

\textbf{Step 3}, in Lemma \ref{temperedpartitionofunity}, we construct a partition of unity $(\chi_{IJ})_{IJ}$ of
$M^n\setminus d_n$ subordinated to the open cover $(C_{IJ})_{IJ}$ i.e.
$\text{supp }\chi_{IJ}\subset C_{IJ} ,\sum_{IJ}\chi_{IJ}=1$ such that
each $\chi_{IJ}$ satisfies the essential property of being 
\textbf{tempered along} $d_n$. 

\textbf{Step 4}, the key idea is that
the product
$\mathcal{R}_{M^I}\left(G_I\right)\mathcal{R}_{M^J}\left(G_J\right)$ is well defined
in $\mathcal{D}^\prime(M^n)$ and the product $\prod_{(i,j)\in I\times J} G^{n_{ij}}(x_i,x_j)$
is tempered along $\partial C_{IJ}$.
Therefore 
\begin{eqnarray*}
\chi_{IJ}\mathcal{R}_{C_{IJ}}(\underset{1\leqslant i<j\leqslant n}{\prod} G^{n_{ij}}(x_i,x_j))=\underset{\in \mathcal{M}(\partial C_{IJ},M^n)}{\underbrace{\chi_{IJ}\prod_{(i,j)\in I\times J} G^{n_{ij}}(x_i,x_j)}} \underset{\in\mathcal{D}^\prime(M^n)}{\underbrace{\mathcal{R}_{M^I}\left(G_I\right)\mathcal{R}_{M^J}\left(G_J\right)}}\in\mathcal{D}^\prime(C_{IJ})
\end{eqnarray*}
is a product of tempered
functions along $\partial C_{IJ}$ with a distribution in $\mathcal{D}^\prime(M^n)$,
therefore it has an extension $\overline{\chi_{IJ}\mathcal{R}_{C_{IJ}}(\underset{1\leqslant i<j\leqslant n}{\prod} G^{n_{ij}}(x_i,x_j))}$ in $\mathcal{D}_{\overline{C_{IJ}}}^\prime(M^n)$ by Theorem \ref{renormproduct}. By construction, $\chi_{IJ}$ vanishes in some neighborhood
of $\partial C_{IJ}\setminus d_n$ in $M^n\setminus d_n$ which implies that $\chi_{IJ}\mathcal{R}_{C_{IJ}}(\underset{1\leqslant i<j\leqslant n}{\prod} G^{n_{ij}}(x_i,x_j))=\overline{\chi_{IJ}\mathcal{R}_{C_{IJ}}(\underset{1\leqslant i<j\leqslant n}{\prod} G^{n_{ij}}(x_i,x_j))}$
in $\mathcal{D}^\prime(M^n\setminus d_n)$.
Then we define $\mathcal{R}_n(\underset{1\leqslant i<j\leqslant n}{\prod} G^{n_{ij}}(x_i,x_j))$ to be the distribution
\begin{equation}
\sum_{IJ}\overline{\chi_{IJ}\mathcal{R}_{C_{IJ}}(\underset{1\leqslant i<j\leqslant n}{\prod} G^{n_{ij}}(x_i,x_j))}.
\end{equation}

\end{proof}
\paragraph{Covering lemma.}
The following Lemma is due to Popineau and Stora \cite[Lemma 2.2 p.~6]{NST}
\cite{Stora02, Popineau} and states that
$M^n\setminus d_n$ can be partitioned as a union of open sets
on which the renormalization map $\mathcal{R}_n$ can factorize. 
\begin{lemm}\label{coveringlemmma}
Let $M$ be a smooth manifold and for all finite subsets $(I,J)$ of $\mathbb{N}$
s.t. $ I\cap J=\emptyset, I\cup J=\{1,\dots,n\}$, let $C_{IJ}=\{(x_1,\dots,x_n)\text{ s.t. }\forall (i,j)\in I\times J x_i\neq x_j\}\subset M^n$. Then 
\begin{eqnarray}
\underset{I\cap J=\emptyset, I\cup J=\{1,\dots,n\}}{\bigcup}C_{IJ}=M^n\setminus d_n.
\end{eqnarray} 
\end{lemm}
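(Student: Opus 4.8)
The plan is to prove the set equality $\bigcup_{I\cap J=\emptyset,\, I\cup J=\{1,\dots,n\}} C_{IJ} = M^n\setminus d_n$ by a double inclusion, working pointwise with a configuration $x=(x_1,\dots,x_n)\in M^n$.

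For the inclusion $\subseteq$, I would argue by contraposition: suppose $x\in d_n$, so that $x_1=x_2=\dots=x_n$. Then for \emph{any} nontrivial partition $I\sqcup J=\{1,\dots,n\}$ (with both $I,J$ nonempty), pick $i\in I$ and $j\in J$; since $x_i=x_j$, the defining condition of $C_{IJ}$ fails, so $x\notin C_{IJ}$. Hence $x$ lies in no $C_{IJ}$, which gives $\bigcup_{IJ} C_{IJ}\subseteq M^n\setminus d_n$. (One should note the convention that the union ranges over partitions into two \emph{nonempty} parts; if $I$ or $J$ were allowed to be empty the condition $C_{IJ}=M^n$ would be vacuous and the statement false, so I would make this restriction explicit at the start.)

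For the reverse inclusion $\supseteq$, take $x\in M^n\setminus d_n$, so not all coordinates are equal. The key step is to produce a single partition $I\sqcup J$ witnessing $x\in C_{IJ}$, i.e. a partition such that $x_i\neq x_j$ for every $i\in I$, $j\in J$. The natural choice: fix the equivalence relation on indices defined by $i\sim j \iff x_i=x_j$, and let $I$ be one equivalence class, $J$ its complement. Since $x\notin d_n$ there are at least two classes, so both $I$ and $J$ are nonempty; and by construction any $i\in I$, $j\in J$ lie in different classes, hence $x_i\neq x_j$. Thus $x\in C_{IJ}$, proving $M^n\setminus d_n\subseteq\bigcup_{IJ}C_{IJ}$ and completing the equality.

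Finally, since each $C_{IJ}$ is manifestly open (it is the complement in $M^n$ of the closed set $\bigcup_{(i,j)\in I\times J} d_{ij}$, where $d_{ij}=\{x_i=x_j\}$ is closed because $M$ is Hausdorff), this exhibits $M^n\setminus d_n$ as a union of open sets as claimed. I do not expect a genuine obstacle here; the only subtlety to get right is the bookkeeping about which partitions are admitted — one must restrict to proper nonempty $I,J$ for the statement to hold, and the equivalence-class construction is precisely what guarantees such a partition exists whenever $x\notin d_n$.
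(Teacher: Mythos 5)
Your proof is correct, but it takes a slightly different (and more elementary) route than the paper. You argue purely combinatorially: given $x\notin d_n$, you take the equivalence relation $i\sim j\iff x_i=x_j$, let $I$ be one class and $J$ its complement, and observe directly that $x\in C_{IJ}$. The paper's proof is metric/topological: it sets $\varepsilon=\inf\{d(x_i,x_1):d(x_i,x_1)>0\}$, forms two open sets $U=\{d(\cdot,x_1)<\varepsilon/3\}$ and $V=\{d(\cdot,x_1)>2\varepsilon/3\}$ with disjoint closures, partitions indices by whether $x_i\in U$ or $x_i\in V$, and concludes $x\in U^I\times V^J\subset C_{IJ}$. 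The resulting partition $(I,J)$ is in fact the same as yours (the class of $x_1$ versus its complement), but the paper gets something slightly stronger for free: a factorized open neighborhood $U^I\times V^J$ of $x$ with $\overline{U}\cap\overline{V}=\emptyset$, which is exactly the shape of hypothesis needed in the factorization axiom used immediately afterward in Theorem \ref{renormthmqft}. Your version proves precisely what the Lemma states and nothing more, and your remark that the union must implicitly range over partitions with both $I$ and $J$ nonempty (otherwise $C_{I,\emptyset}=M^n$ vacuously) is a legitimate point of bookkeeping that the paper glosses over.
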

\begin{proof}
The key observation is the following,
$(x_1,\dots,x_n)\in d_n\Leftrightarrow \forall U \text{ neighborhood of }x_1, (x_1,\dots,x_n)\in U^n$.
On the contrary 
\begin{eqnarray*}
&&(x_1,\dots,x_n)\notin d_n\\
&\Leftrightarrow & \exists (U,V) \text{ open s.t. }\overline{U}\cap \overline{V}=\emptyset, I\cup J=\{1,\dots,n\}, I\cap J=\emptyset \text{ s.t. } (x_1,\dots,x_n)\in U^I\times V^J.
\end{eqnarray*}
It suffices to set $\varepsilon=\underset{1<i\leqslant n }{\inf}\{d(x_i,x_1) \text{ s.t. } d(x_i,x_1)>0\}$ then let $U=\{x \text{ s.t. }d(x,x_1)< \frac{\varepsilon}{3}\}$ and $V=\{x \text{ s.t. }d(x,x_1)>\frac{2\varepsilon}{3}\}$.

It follows that the complement $M^n\setminus d_n$ of the
\emph{small diagonal} $d_n$ in $M^n$ 
is covered by open sets
of the form
$C_{IJ}=M^n\setminus\left(\cup_{(i,j)\in I\times J}d_{ij}\right)$ where
$I\cup J=\{1,\dots,n\}, I\cap J=\emptyset$.
\end{proof}
\paragraph{Tempered partition of unity
associated to the cover $\left(C_{IJ}\right)_{IJ}$.}
\begin{lemm}\label{temperedpartitionofunity}
Let $M$ be a smooth manifold and let $(C_{IJ})_{IJ}$
be the cover of $M^n\setminus d_n$ defined in Lemma \ref{coveringlemmma}
then there exists a partition of unity
$(\chi_{IJ})_{IJ}$ subordinated to $(C_{IJ})_{IJ}$
such that every function $\chi_{IJ}$ is tempered along $d_n$.
\end{lemm}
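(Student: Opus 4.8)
The plan is to build $(\chi_{IJ})_{IJ}$ by homogenization near the small diagonal $d_n$. The crucial observation is that, in suitable coordinates near $d_n$, every $C_{IJ}$ is a \emph{cone} invariant under dilations in the directions transverse to $d_n$; a partition of unity subordinated to such a cone and positively homogeneous of degree $0$ in those directions automatically satisfies $|\partial^\alpha\chi|\leqslant C_\alpha\, d(\,\cdot\,,d_n)^{-|\alpha|}$, which is precisely temperedness along $d_n$.

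First I would work near a point $(p,\dots,p)\in d_n$. Choose a chart $\kappa\colon W\to\mathbb{R}^d$ of $M$ with $\kappa(p)=0$; then $\kappa^n\colon W^n\to(\mathbb{R}^d)^n$ is a chart of $M^n$ in which each $d_{ij}$ becomes the linear subspace $\{v_i=v_j\}$ and $d_n$ the diagonal $\Delta=\{v_1=\dots=v_n\}$. Passing to the barycentric splitting $v=(w,u)$, with $w=\frac1n\sum_i v_i\in\mathbb{R}^d$ and $u=(v_i-w)_i$ running over the hyperplane $H=\{\sum_i u_i=0\}\cong(\mathbb{R}^d)^{n-1}$, the diagonal $d_n$ becomes $\{u=0\}$ and each $d_{ij}$ the subspace $\{u_i=u_j\}$, which is invariant under the dilations $\Phi^\lambda\colon(w,u)\mapsto(w,\lambda u)$, $\lambda>0$. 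Hence $C_{IJ}$ is, in these coordinates, the trace on $W^n$ of the $\Phi^\lambda$--invariant open cone $C'_{IJ}=\{(w,u):u_i\neq u_j\ \forall(i,j)\in I\times J\}\subset(\mathbb{R}^d\times H)\setminus\{u=0\}$, and $(C'_{IJ})_{IJ}$ is an open cover of $(\mathbb{R}^d\times H)\setminus\{u=0\}$ by the same combinatorial argument as in Lemma \ref{coveringlemmma}. Taking any smooth partition of unity $(\sigma_{IJ})_{IJ}$ subordinated to $(C'_{IJ})_{IJ}$ and setting
\[
\chi^{\mathrm{loc}}_{IJ}(w,u)=\sigma_{IJ}\!\left(w,\tfrac{u}{|u|}\right),\qquad u\neq 0,
\]
one gets a smooth partition of unity on $W^n\setminus d_n$: since $C'_{IJ}$ is a cone, $\text{supp }\chi^{\mathrm{loc}}_{IJ}\subset C'_{IJ}$, and $\sum_{IJ}\chi^{\mathrm{loc}}_{IJ}\equiv1$ on $\{u\neq0\}$ because $(w,u/|u|)$ always lies where $\sum_{IJ}\sigma_{IJ}=1$.

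The temperedness estimate is then immediate. Each $\chi^{\mathrm{loc}}_{IJ}$ is smooth on $\{u\neq0\}$ and homogeneous of degree $0$ in $u$, so $\partial^\alpha\chi^{\mathrm{loc}}_{IJ}$ is homogeneous of degree $-|\alpha_u|\geqslant-|\alpha|$ in $u$ and still smooth away from $u=0$; restricting $w$ to a compact set and using that $|u|$ is comparable to $d(\,\cdot\,,d_n)$ inside the chart, we obtain $|\partial^\alpha\chi^{\mathrm{loc}}_{IJ}(x)|\leqslant C_\alpha(1+d(x,d_n)^{-|\alpha|})$, i.e. $\chi^{\mathrm{loc}}_{IJ}\in\mathcal{M}(d_n,\,\cdot\,)$ locally. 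To globalize, cover $d_n$ by a locally finite family of such barycentric charts $(W_a^n)_a$ with local partitions $(\chi^{(a)}_{IJ})_{IJ}$, let $(\theta_a)_a$ be a smooth partition of unity of the neighborhood $N=\bigcup_a W_a^n$ of $d_n$ subordinated to $(W_a^n)_a$, pick $\theta\in C^\infty(M^n)$ with $\theta\equiv1$ on a smaller neighborhood of $d_n$ and $\text{supp }\theta\subset N$, and let $(\eta_{IJ})_{IJ}$ be an ordinary smooth partition of unity of $M^n\setminus d_n$ subordinated to $(C_{IJ})_{IJ}$. Then
\[
\chi_{IJ}\;=\;\theta\sum_a\theta_a\,\chi^{(a)}_{IJ}\;+\;(1-\theta)\,\eta_{IJ}
\]
does the job: every summand is supported in $C_{IJ}$, so $\text{supp }\chi_{IJ}\subset C_{IJ}$; on $M^n\setminus d_n$ one has $\sum_{IJ}\chi_{IJ}=\theta\sum_a\theta_a+(1-\theta)=1$, using that $\sum_{IJ}\chi^{(a)}_{IJ}=1$ on $\text{supp }\theta_a$ and $\sum_a\theta_a=1$ on $\text{supp }\theta$; and $\chi_{IJ}$ is tempered along $d_n$ because $(1-\theta)\eta_{IJ}$ vanishes near $d_n$, hence extends to a function in $C^\infty(M^n)$, whereas each $\theta\theta_a\chi^{(a)}_{IJ}$ is the product of a function in $C^\infty(M^n)$ with a function tempered along $d_n$, and $\mathcal{M}(d_n,M^n)$ is an algebra stable under locally finite sums.

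The only genuinely delicate step is the first paragraph: choosing coordinates in which the diagonals $d_{ij}$ become dilation--invariant, so that homogenization yields functions that are still subordinated to $(C_{IJ})_{IJ}$ while gaining the decay $d(\,\cdot\,,d_n)^{-|\alpha|}$ on their $\alpha$--th derivatives. Once this conical picture is in place, both the derivative bounds and the partition--of--unity patching are routine.
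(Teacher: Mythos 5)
Your proof is correct and follows the same essential strategy as the paper's: near $d_n$ one identifies the $C_{IJ}$ with conical sets that are invariant under dilations in the directions transverse to $d_n$, builds a local partition of unity by homogenization of degree $0$ (composing with the normalization $u\mapsto u/|u|$), reads off temperedness from the bound $|\partial^\alpha(u/|u|)|\lesssim(1+|u|^{-|\alpha|})$, and then patches with an arbitrary smooth partition of unity away from $d_n$. The only differences are in how the conical coordinates are produced: the paper invokes the tubular neighborhood theorem and trivializes the normal bundle $N(d_n\subset M^n)$ over a cover $(U_i)$ of $M\cong d_n$, while you proceed more directly by taking product charts $\kappa^n\colon W^n\to(\mathbb{R}^d)^n$ coming from charts $\kappa$ of $M$ and performing an explicit barycentric splitting $(w,u)$. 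Your route is slightly more elementary in that it dispenses with the abstract tubular-neighborhood/normal-bundle language, at the cost of working with the specific coordinate identification $d_{ij}=\{u_i=u_j\}$; the two are of course the same underlying construction, since the barycentric $(w,u)$ coordinates \emph{are} a trivialization of the normal bundle of $d_n$ over the chart $W$, and your final patching formula $\chi_{IJ}=\theta\sum_a\theta_a\chi^{(a)}_{IJ}+(1-\theta)\eta_{IJ}$ is the direct analogue of the paper's $\chi_{IJ}=\chi_1\Phi^*(\sum_i\varphi_i\Psi_{U_i}^*\chi^i_{IJ})+\chi_2\tilde\chi_{IJ}$.
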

\begin{proof}
We will first construct a partition of unity in some
neighborhood $\mathcal{N}$ of $d_n$. 
Consider the normal bundle $N(d_n\subset M^n)$ 
of $d_n$ in $M^n$, 
by the tubular neighborhood Theorem, 
there is some neighborhood
$\mathcal{N}$ of $d_n$ in $M^n$ and
a diffeomorphism $\Phi:\mathcal{N}\mapsto \Phi(\mathcal{N})\subset  N(d_n\subset M^n)$ 
which identifies
the neighborhood $\Phi(\mathcal{N})$ of the zero section $\underline{0}\subset N(d_n\subset M^n)$
with $\mathcal{N}\subset M^n$.

%We identify the open sets $C_{IJ}$ with conic sectors $\Phi(C_{IJ})$
%in the normal bundle $N(d_n\subset M^n)$.
Let us denote by $\Phi(\mathcal{N})\setminus \underline{0}$
the neighborhood of the zero section $\underline{0}$ deprived of $\underline{0}$.
Then $\Phi(C_{IJ})_{IJ}$ forms a conical open cover
of $\Phi(\mathcal{N})\setminus \underline{0}$. To see what happens in coordinates,
let $U_i$ be some cover of $M$,
we trivialize our bundle over $U_i$, 
$\Psi_{U_i}:N(d_n\subset M)|_{U_i}\mapsto U_i\times 
\mathbb{R}^{(n-1)d}$ 
and we use local coordinates $(x, h_2,\dots,h_n)\in U_i\times\mathbb{R}^{n(d-1)}$. 
The set
$\Psi_{U_i}\circ\Phi\left(C_{IJ}\right)$ has simple expression $\Psi_{U_i}\circ\Phi\left(C_{IJ}\right)=\underset{(i,j)\in I\times J}{\cap}  \{h_i-h_j\neq 0\}$ and is 
\textbf{therefore invariant under scalings
$(x,h_2,\dots,h_n)\mapsto (x,\lambda h_2,\dots,\lambda h_n) $ and does not depend
on $x\in U_i$}. Therefore the sets
$\left(\Psi_{U_i}\circ\Phi\left(C_{IJ}\right)\right)_{IJ}$ is an open conical cover
of $\left(\mathbb{R}^{d(n-1)}\setminus \{0\}\right)\times U_i$. 
%We construct a partition of unity of $N(d_n\subset M^n)\setminus\{\underline{0}\}$
%subordinated to the conic cover $\Phi(C_{IJ})_{IJ}$.
Let $\left(\chi_{iIJ}\right)_{IJ}$ be the corresponding partition of unity, then
we can choose every $\chi_{iIJ}$ of the form $f_{iIJ}(\frac{h}{\vert h\vert}),\vert h\vert=\sqrt{\sum_{i=2}^n h_i^2},f_{iIJ}\in C^\infty(\mathbb{R}^{d(n-1)}\setminus \{0\})$
since $\Psi_{U_i}\circ\Phi\left(C_{IJ}\right)$ is conical and does not depend on $x$. 
Therefore
combining the Faa Di Bruno 
formula and the fact 
that 
$\vert\partial^k_h \left(\frac{h}{\vert h\vert}\right)\vert\leqslant C_k(1+\vert h\vert^{-\vert k\vert})$
yields that:
\begin{eqnarray*}
\vert \partial^\alpha_h \chi_{iIJ}(h)\vert=\vert\partial^\alpha_h f_{iIJ}(\frac{h}{\vert h\vert }) \vert\leqslant
C_\alpha(1+\vert h\vert^{-\vert\alpha\vert} ) 
\end{eqnarray*}
which implies that
$\chi_{iIJ} $ is tempered along $U_i\times\{0\}$ therefore $\Psi_{U_i}^*\chi_{iIJ}$
is tempered along
the zero section $\underline{0}|_{U_i}\subset N(d_n\subset M^n)|_{U_i}$.
Let $(\varphi_i)_i$ be a partition of unity subordinated 
to the cover $(U_i)_i$ of $M$, then the functions
$\left(\sum_{i}\varphi_i\psi_i^*\chi^i_{IJ}\right)_{IJ}$ form a 
partition of unity
of $N(d_n\subset M^n)\setminus\{\underline{0}\}$
which is subordinated to the conic cover $\Phi(C_{IJ})_{IJ}$.

To go back to the configuration space $M^n$, choose a neighborhood $\mathcal{N}^\prime$ of $d_n$ 
s.t. $\mathcal{N}$ is a neighborhood of $\mathcal{N}^\prime$, we have the inclusions
$d_n\subset \mathcal{N}^\prime\subset \mathcal{N}$.
Let $\chi_1,\chi_2$ be a partition of unity
subordinated to the cover $(\mathcal{N},M^n\setminus \mathcal{N}^\prime)$
and choose $\left(\tilde{\chi}_{IJ}\right)_{IJ}$ to be an arbitrary partition of unity subordinated
to the cover $\left(C_{IJ}\right)_{IJ}$ of $M^n\setminus d_n$.
Then set $\chi_{IJ}=\chi_1\Phi^*\left(\sum_{i}\varphi_i\Psi_{U_i}^*\chi^i_{IJ}\right)+\chi_2\tilde{\chi}_{IJ}$ 
and it 
follows by construction that every $\chi_{IJ}$ is tempered along $d_n$.
\end{proof}

\end{document}